\newtheorem{theorem}{Theorem}[section]
\newtheorem{proposition}[theorem]{Proposition}
\newtheorem{example}[theorem]{Example}
\newtheorem*{question*}{Question}
\DeclarePairedDelimiter\parentheses{(}{)}
\DeclarePairedDelimiter\braces{\{}{\}}
\DeclarePairedDelimiter\brackets{[}{]}
\DeclarePairedDelimiter\absolute{|}{|}
\DeclareMathOperator{\supp}{supp}
\begin{document}
\title{Information Design in the Principal-Agent Problem}
\date{September 4, 2024}
\author{Yakov Babichenko\thanks{Technion--Israel Institute of Technology | \emph{E-mail}: \href{mailto:yakovbab@technion.ac.il}{yakovbab@technion.ac.il}.} 
\and 
Inbal Talgam-Cohen\thanks{Tel Aviv University and Technion--Israel Institute of Technology | \emph{E-mail}: \href{mailto:inbaltalgam@gmail.com}{inbaltalgam@gmail.com}.}
\and 
Haifeng Xu \thanks{University of Chicago | \emph{E-mail}: \href{mailto:haifengxu@uchicago.edu}{haifengxu@uchicago.edu}.} \and Konstantin Zabarnyi\thanks{Technion--Israel Institute of Technology | \emph{E-mail}: \href{mailto:konstzab@gmail.com}{konstzab@gmail.com}.}}
\maketitle

\begin{abstract}
We study a variant of the principal-agent problem in which the principal does not directly observe the agent's effort outcome; rather, she gets a signal about the agent's action according to a \emph{variable} information structure designed by a regulator. We consider both the case of a risk-neutral and of a risk-averse agent, focusing mainly on a setting with a limited liability assumption. We provide a clean characterization for implementability of actions and utility profiles by any information structure, which turns out to be simple thresholds on the utilities. We further study naturally constrained information structures in which the signal emitted from any action is either the action itself or some actions nearby. We show that the worst implementable welfare deteriorates gracefully as the information structure becomes noisier. Finally, we show that our clean characterization does not generalize to a larger class of signaling constraints. In fact, even deciding whether a certain action is implementable by some constrained information structure from this class is NP-complete in the general~setting.
\end{abstract}

\section{Introduction}
\paragraph{The principal-agent problem.}
In the classic contract design model in microeconomic theory, a principal delegates a project to an agent.
The agent has a ``production technology'', i.e., can choose among a set of available \emph{actions} (a.k.a.~\emph{effort levels}), with each action imposing a \emph{cost} upon the agent. Each action induces a distribution over possible \emph{outcomes}, which determine the principal's payoff from the project. The conflict of interests that arises from the separation between bearing the cost of the project (the agent) and enjoying its outcome (the principal) is known as \emph{moral~hazard}.

Given this misalignment of interests, the principal must incentivize the agent to invest effort in the project --- this is known as the \emph{principal-agent problem}. An underlying, inherent assumption is that the principal does not directly observe the agent's action; therefore, she cannot simply ``purchase'' the agent's work. In the classic principal-agent model, the principal can observe only some outcome induced by the agent's action. To \emph{implement} an action, i.e., to incentivize the agent to take it, the principal pays the agent according to the outcome: She designs a \emph{contract} specifying the monetary \emph{transfer} to the agent as a function of the outcome (notably, not as a function of the unobserved action). The principal's goal in designing the contract is to maximize her expected utility (payoff minus transfer), where the expectation is over the distribution of outcomes induced by the agent's action. 

The economic literature on contract design is extensive (see, e.g., the classic works of~\cite{mirrlees1971exploration,ross1973economic,holmstrom1979moral,grossman1992analysis,mirrlees1999theory} and the textbooks~\cite{macho2001introduction,brousseau2002economics,bolton2004contract,laffont2009theory}). 
Classic applications of the principal-agent problem include designing insurance~\cite{pauly1968economics,arrow1978uncertainty,vera2003structural,rauchhaus2009principal}, governmental contracting~\cite{coats2002applications,roach2016application,mcafee2019incentives}, delegating tasks within an organization~\cite{tommasi2007centralization}, public budgeting~\cite{skok1980budgetary,smith1996principals}, and incentivizing agents to work as a team~\cite{holmstrom1982moral}.
There are also modern, often computerized applications, including crowdsourcing~\cite{HoSV16} and acquiring other online service, incentivizing data sources in statistical learning~\cite{CaiDP15}, designing pay-for-performance healthcare and payments for ecosystem services~\cite{bastani2016analysis,LiAL21}, smart contracts in blockchain~\cite{cong2019blockchain}, 
and device-to-device communication~\cite{hu2019multi}. Such applications have led to a growing interest in contract design from the computational perspective~\cite[see, e.g.,][]{BabaioffFNW12,DuttingRT19,dutting2021complexity} and learning~\cite{cohen2022learning,bacchiocchi2023learning}.

\subsection{Information in the Principal-Agent Problem}
The incomplete information of the principal regarding the agent's action is addressed by the vast contract design literature. With full information, the principal can easily implement the welfare-maximizing action by discriminating the different actions via payments, thus extracting the full welfare as her utility --- the first-best solution. With unknown action, there is tension between economic value and informativeness --- an economically-efficient action can be hard to infer from its outcome, and thus costly to implement, and vice versa. The welfare loss relative to the first-best solution, as well as the principal's utility loss from delegating the project, are both due to this information gap.

Note that in the classic principal-agent model described above, the outcomes have a dual role --- they simultaneously specify the principal's payoff and provide the principal with information about the agent's action. In other words, the production technology mapping actions to outcomes also serves as a ``monitoring technology'' of the principal over the agent. This dual role makes it difficult to perform a comprehensive study of the power of different monitoring technologies for overcoming the information gap.
 
\paragraph{Outcomes versus signals.} 
In many principal-agent interactions, it is natural to assume a separation between the production and monitoring technologies. That is, the principal does not directly observe the outcome of the agent's action; instead, the principal observes a noisy \emph{signal} containing information about the action. We consider an \emph{information structure} specifying the information sent to the principal given every agent's action. In contrast to the classic dual role of outcomes, the information structure in our setting 
is ``responsible'' only for informing the principal, not for determining her income.

Such consideration of separation of the roles between signals and outcomes has been suggested previously, e.g.,~by ~\cite{mirrlees1976optimal,grossman1992analysis}. However, the literature has focused largely on understanding how noisy signals affect the principal's ability to extract utility from the contract. This direction in research established the \emph{informativeness principle}, by which more informative signaling decreases the payment required to implement an action~\cite{holmstrom1979moral,shavell1979risk,gjesdal1982information,grossman1983implicit}. However, as stated by~\citet{salanie2005economics}, ``\emph{the relation `being more informative than' is only a very partial order over the set of possible technologies}''. The aforementioned dual role of the signals in the principal-agent model merges the production and monitoring technologies as a randomized mapping from actions to outcomes. In this paper, we single out the monitoring part, captured by the information structure. The contribution of this paper is to study \emph{all} possible information structures, and how they can affect the optimal contract's welfare and utility.

\subsection{Designing the Information}
\label{sub:info}
We seek to understand the power of designing the information structure by a \emph{regulator} (a.k.a.~\emph{social planner}); for examples of social planners, see Subsection~\ref{sub:apply}. Note that after the regulator fixes the information structure, we are back in a classic principal-agent setting, in which the principal designs an optimal contract and the agent reacts by choosing an (unobservable) action. We say that an action $i$ is \emph{implementable under an information structure} if after the information structure is set in place by the regulator, there exists an optimal contract by the principal that incentivizes the agent to take action $i$. In other words, an equilibrium of the resultant principal-agent problem settles at the agent taking action $i$. We are primarily interested in understanding both agents' utilities and the implemented action at this equilibrium. This leads us to study the following basic research question, which surprisingly has not been thoroughly examined previously.

\begin{question*}
Which utility profiles for the principal and agent (and which welfare obtained from their sum) are implementable under \emph{some} information structure, assuming that the principal chooses an optimal contract?
\end{question*}

The above question views the principal-agent problem from the perspective of the regulator, who can design the information structure to determine what, and how much the principal can observe about the agent's action. We do not specify the regulator's objective; instead, we focus on a more general characterization question to understand all possible outcomes of the principal-agent interactions across \emph{all possible information structures}. As a corollary of our result, a regulator with any natural objective function can maximize the objective over the domain of the implementable~outcomes.

\paragraph{Information design for moral hazard versus adverse selection.} The work that is most relevant to ours in spirit is perhaps the celebrated recent work of~\citet{bergemann2015limits}, who study a very similar question --- i.e., implementability of utility profiles by an information structure --- but in a different setup of \emph{monopoly pricing}. Their work is an example of applying the flourishing field of \emph{information design} to one of the most basic adverse selection problems, namely price discrimination. We wish to achieve similar insights to those arising from their work, but for information design under \emph{moral hazard} (rather than monopoly pricing under \emph{adverse selection}). Interestingly, despite being obtained from quite different techniques, the conceptual messages in our principal-agent problem versus monopoly pricing turn out to be similar --- information design is sufficiently powerful to implement any utility pair above some simple thresholds for both participants.\footnote{The participants are seller and consumer in monopoly pricing, and principal and agent in our setting.}
Through tackling the above question in a moral hazard setting, we hope to spur more work on the wider research frontier of information design for moral hazard --- in Section~\ref{sec:con}, we highlight several concrete future research directions at the intersection of information and contract design. For further discussion of~\cite{bergemann2015limits}, see Section~\ref{sec:related}.

\subsection{Applications}
\label{sub:apply}
Our model applies to many contractual settings in which the monitoring technology is designed by a separate entity from the contracting parties. Computational advancements make the problem of designing the information structure particularly timely, as these enable a wide, fine-grained range of monitoring levels~\cite{li2020optimal}.

To showcase its applications, we mention a few motivating examples here. As a first example, consider an employee (agent) working on a long-term project. The outcome of the employee's effort will be observable by the direct manager (principal) only in the future when the project concludes. The exact amount of effort is also unobservable, given the manager's limited ability to follow her employee's every move. A policy fixed by the company (regulator) allows the manager a certain set of supervision abilities (monitoring technology). The company may have different objectives from the direct manager, such as maintaining sufficiently high utility for the employee, maximizing the company's welfare, and protecting the employee's privacy. Moreover, a company may wish to encourage the employees to take a \emph{specific} action that would benefit the company's long-term~revenue.

Motivated by the restrictions in real world, it is of importance to study constrained design in which not all information structures are available to the regulator. One of the arguably most natural constraints is that the signal a direct manager observes about the employee's action cannot be too noisy. For instance, the manager might inherently have an estimation of the employee's effort containing a bounded noise; the company may wish to enforce a monitoring policy reducing this noise even further. We capture such applications by an abstracted model introduced in Section~\ref{sec:noise}.

Secondly, consider an institute (principal) and a candidate (agent). The candidate's action is the level of effort she invests in preparing for a position at the institute, for which the institute compensates her through her salary. The institute gets an indication about the agent's preparedness from a standardized test designed by an external body (social planner), whose objectives are typically different from both the institute and the candidate~\cite[see, e.g.,][]{kokovikhin2016national}.

Thirdly, our framework is also applicable to platforms for services that match two parties (e.g.,~freelancing platform Upwork that matches tasks with workers), on which the monitoring technology is usually determined by a third party --- the platform itself. Notably, in some of these applications, the social planner may be constrained about what types of signals can potentially be sent. This is another possible motivation for studying constrained information structures.

A family of applications involves monitoring via outsourced advanced technology~-~e.g., monitoring of high-tech employees through commercial software run on their computers. The company creating the software (the regulator) has flexibility in its design and commitment power (due to credibility or legislative issues), but it cannot enforce payments. Another example is monitoring a salesperson’s performance by automated analysis of the sales call: ``running speech analytics software consumes server space and power, and [...] has been increasingly outsourced to third parties"~\cite{li2020optimal}. A very recent example is the auditing of a generative learning agent to ensure alignment with the principal -- a challenging task often requiring third-party cutting-edge technology. Finally, our model captures the case of an agent who is also a regulator in the sense that the agent commits to a schedule of self-reporting or inspections.

\subsection{Our Results}
\label{sub:contribution}
Somewhat surprisingly, even though the principal-agent problem with unobserved outcomes has appeared in previous work (see Section~\ref{sec:related} for a detailed discussion of related work), to the best of our knowledge, ours is the first work to study the implementability of utility profiles and expected transfers. We show that the flexibility of choosing the information structure is powerful: Every possible utility profile that exceeds some natural thresholds for the principal and agent can be implemented by some information structure; see Figures~\ref{fig:1} and~\ref{fig:3} for a visualization of the implementable utility profiles for risk-neutral and risk-averse agents, respectively. 

Furthermore, every action the welfare of which exceeds the welfare of the least costly action can be implemented. These conditions are necessary and sufficient; thus, this is a tight characterization of implementable utility profiles and agent's actions. We further show that this characterization holds for both a risk-neutral agent (Section~\ref{sec:risk-neutral}) and a risk-averse agent (Section~\ref{sec:risk-averse}), but the set of implementable utility pairs under risk-aversion becomes significantly richer. The richness of the sets of implementable actions and utility profiles prescribes the power of the regulator with the ability to design the information structure.\footnote{All our results are constructive, and thus naturally admit polynomial-time algorithms if one is interested in the computational complexity. That is, we can explicitly construct the information structure that induces any implementable utility profile and action at~equilibrium.}

In Section~\ref{sec:noise}, we consider a natural constrained version of the problem motivated by inherent limitations on information structures discussed in the previous subsection. We identify the set of signals directly as the set of actions $\brackets*{n}$. For any fixed \emph{noise level} described by an integer $d\geq 0$,  if the agent takes action $a$, the induced signal is assumed to be supported on signals from the set $\brackets*{a-d,a+d}$ (with obvious truncation if $a-d$ or $a+d$ exceeds the boundary). When $d=0$, this captures one extreme case in which the principal observes the agent's action, and the regulator has no flexibility at all. When $d=n-1$, this captures the other extreme case in which all information structures are available for the regulator. In the direct manager-employee application, $d$ may represent a bound on the ability of the direct manager to immediately distinguish different levels of efforts of her employee, without imposing a stricter monitoring policy.

We characterize the set of implementable actions for all intermediate values of $d$. Specifically, an action is implementable if and only if its welfare is weakly above the minimal welfare of every $2d+1$ consequent actions (Theorem~\ref{thm:noise}). This result nicely demonstrates the efficiency loss that can be caused by the inaccuracy of the principal's information. When $d=0$, only the most efficient outcome is implementable. However, such an outcome might be undesirable by the regulator, since the employee's revenue is $0$, and the employee's privacy might be violated. As $d$ increases, more and more inefficient outcomes might arise, yet this phenomenon occurs~gradually. We further prove that a slightly enriched class of  constraints on the information structures -- namely, upper-bounding the allowed probability of mapping an action to some signal -- would make it NP-complete to decide whether a certain action is implementable (Proposition~\ref{pro:hard}).

\subsection{Paper Organization}
\label{sub:org}
In Section~\ref{sec:pre}, we provide the formal model for our paper. In Sections~\ref{sec:risk-neutral} and~\ref{sec:risk-averse}, we characterize the sets of implementable actions and expected utility profiles when the agent is risk-neutral and risk-averse, respectively. In Appendix~\ref{ap:further}, we consider the implications of dropping the limited liability assumption on the results of these sections. In Section~\ref{sec:noise}, we study the implementable actions under a family of information structures representing increasingly noisy observations of the outcomes, and provide a computational hardness result for a more general family of information structures (Subsection~\ref{sub:negative}). Section~\ref{sec:related} discusses related papers, and Section~\ref{sec:con}~concludes.

\section{Model}
\label{sec:pre}
The \emph{principal-agent problem} involves two strategic players --- the \emph{principal} and the \emph{agent}. The principal's design problem is given by a tuple $\parentheses*{n,r,c,k,I}$ with the following interpretation:
\begin{itemize}
    \item $\brackets*{n}:=\braces*{0,1,\ldots,n}$ is the set of possible actions of the agent, where $n\in \mathbb{N}$.\footnote{Our results can be easily generalized to an agent with a continuum of possible actions.}
    \item $c=\parentheses*{c_0,c_1,...,c_n}\in\mathbb{R}_{> 0}^{n+1}$ is the vector of the agent's \emph{costs} for taking the corresponding actions. Without loss of generality, we assume that $0\leq c_1 \leq ...\leq c_n$.
    \item $r=\parentheses*{r_0,r_1,...,r_n}\in\mathbb{R}_{\geq 0}^{n+1}$ is the principal's vector of \emph{incomes} (in expectation over the outcomes) induced by the corresponding actions. We distinguish between the principal's income induced by the action only, and her overall \emph{utility} that also considers her transfer to the agent.
    \item The action $0$ is a special \emph{default} action with $c_0=0$. Namely, the agent always has the option to exert no effort. We do not assume anything on $r_0$, besides that it is non-negative.
    \item As is natural, we assume that for $i\neq i'$, it holds that $\parentheses*{c_i,r_i}\neq \parentheses*{c_{i'},r_{i'}}$.
    \item $k\in \mathbb{N}$ is the number of possible \emph{signals} the principal can observe.
    \item A row-stochastic matrix $I=\parentheses*{I_{i,j}}\in \mathbb{R}^{n\times k}$ is the \emph{information structure}.\footnote{A \emph{row-stochastic matrix} is a real matrix with non-negative entries s.t.~each row sums up to $1$.} The $i$-th row $I_i$ of $I$ ($1\leq i\leq n$) specifies the distribution over the $k$ signals when the agent's action is $i$.\footnote{We remark that an alternative model could allow the regulator to randomize over information structures so that both the principal and the agent observe the realized information structure before the principal chooses a contract. The results in this paper do not change significantly in such a model. In fact, such a model convexifies the problem and potentially simplifies some arguments in the proofs. However, we see an added value in proving the stronger results on the existence of an information structure with certain properties rather than a distribution over those.} Note that when the agent chooses to take the default action $0$ --- it is revealed to the principal.
\end{itemize}

A \emph{contract} is a vector $t=\parentheses*{t^1,t^2,...,t^k}$, where $t^j$ is the principal's transfer to the agent upon observing the signal $j$. In the main Sections~\ref{sec:risk-neutral}-\ref{sec:noise}, we assume \emph{limited liability}~\cite{sappington1983limited,innes1990limited}: $t^j \geq 0$ for every $1\le j\le k$ --- i.e., the principal cannot charge the agent; however, we show in Appendix~\ref{ap:further} that similar results hold when the principal may charge the agent. The induced expected transfer at action $a\in \braces*{1,\ldots,n}$ is denoted by $t_a:=\mathbb{E}_{j\sim I_a}\brackets*{t^j}$. We denote $t_0:=0$, as the transfer for action $0$ is always $0$.

The \emph{principal's utility} is $u^P_a=r_a-t_a$ --- i.e., her income for the agent's action minus the expected monetary transfer. Similarly, the agent's utility depends both on her action and the principal's transfer. When the agent is \emph{risk-neutral} (Sections~\ref{sec:risk-neutral} and~\ref{sec:noise}), her expected utility is linear. Specifically, for an action $a$ and a contract $t$, the agent's utility is $u^A_a=t_a-c_a$. When the agent is \emph{risk-averse} (Section~\ref{sec:risk-averse}), we take a concave von Neumann-Morgenstern function $v:\mathbb{R}_{\geq 0}\to \mathbb{R}_{\geq 0}$ to capture the part of agent's utility originating from the monetary transfer. The total agent's utility at action $a\in \brackets*{n}$ is $u^A_a:=\mathbb{E}_{j\sim I_a} \brackets*{v\parentheses*{t^j}}-c_a$.

The principal-agent interaction is a Stackelberg game~\cite{stackelberg1952theory} in which the principal chooses a contract, and thereafter the agent chooses an action. The agent picks an action maximizing his expected utility; the principal selects a contract maximizing her expected utility assuming that the agent will pick his best action given the contract. As is standard, we consider \emph{pure} subgame perfect equilibria in which the agent brakes ties in the principal's favor. Note that the utility profile at equilibrium is uniquely determined.

\paragraph{Implementability.} We say that an information structure $I$ \emph{implements the utility profile $\parentheses*{x,y}\in\mathbb{R}_{\geq 0}\times \mathbb{R}_{\geq 0}$} if there exists an equilibrium outcome of the Stackelberg game $\parentheses*{n,r,c,k,I}$ for which $u^P_a=x$ and $u^A_a=y$. Similarly, we say that $I$ \emph{implements the action-transfer pair $\parentheses*{a,s}\in \brackets*{n}\times \mathbb{R}_{\geq 0}$} if the chosen action at the equilibrium is $a$ and the expected transfer is $s$. Call $\parentheses*{x,y}\in \mathbb{R}_{\geq 0}\times \mathbb{R}_{\geq 0}$ (or $\parentheses*{a,s}\in \brackets*{n}\times \mathbb{R}_{\geq 0}$) \emph{implementable} if there exists an information structure $I$ that implements the utility profile $\parentheses*{x,y}$ (or the action $a$ and the transfer $s$). \emph{The goal of this paper is to characterize the set of implementable utility profiles $\parentheses*{x,y}$ and action-transfer pairs~$\parentheses*{a,s}$}.

\section{Risk-Neutral Agent}
\label{sec:risk-neutral}
In this section, we consider a risk-neutral agent. That is, the agent's (expected) utility at action $a$ given a contract $t$ is $u^A_a=t_a-c_a$. The \emph{social welfare} at action $a$ is $w_a=u^P_a+u^A_a=r_a-c_a$. Note that $w_0=r_0$.

The set of possible utility profiles is:
\begin{align*}
F:=\braces*{\parentheses*{r_i-s,-c_i+s}:\;i\in \brackets*{n},\; s\in \mathbb{R}_{\geq 0}}\cup \braces*{\parentheses*{r_0,0}}
\end{align*}
(see Figure~\ref{fig:1}). Note that with a risk-neutral agent, the implementability of an action-transfer pair $\parentheses*{a,s}$ directly implies the implementability of the utility pair $\parentheses*{r_a-s,s-c_a}$.

\begin{figure}[h]
    \centering
    \scalebox{0.8}{
    \begin{tikzpicture}
    \draw (-1,0)--(2,0);
    \draw[blue] (2,0)--(7,0);
    \draw[->] (7,0)--(8,0);
    \node[right] at (8,0) {$u^P$};
    
    \draw[->] (0,-3.5)--(0,6);
    \node[above] at (0,6) {$u^A$};
    
    \draw (4,-2)--(-1,3);
    \draw (5,-4)--(-1,2);
    \draw (7,-3.5)--(-1,4.5);
    \draw (8,-3)--(-1,6);
    
    \draw[line width=1mm] (3.5,0)--(2,1.5);
    
    \draw[line width=1mm] (5,0)--(2,3);
    
    \filldraw (4,-2) circle(0.1);
    \node[below] at (4.2,-2) {$\parentheses*{r_1,-c_1}$}; 
    
    \filldraw (8,-3) circle(0.1);
    \node[below] at (8,-3) {$\parentheses*{r_i,-c_i}$};
    
    \filldraw (2,0) circle(0.1);
    \node[below] at (2,-0.1) {$w_1$};
    
    \node[below] at (5,-0.1) {$w_i$};
    
    \node[right] at (2,3) {$(w_1,w_i-w_1)$};

    \draw[blue] (2,0)--(2,6);
    
    \end{tikzpicture}}
    \caption{The union of all $n$ lines is the set $F$. In this figure, $w_1\geq 0$, $r_0=0$, and the bold parts of the lines are the implementable utility profiles. The index $i$ denotes the social welfare-maximizing~action.}
    \label{fig:1}
\end{figure}
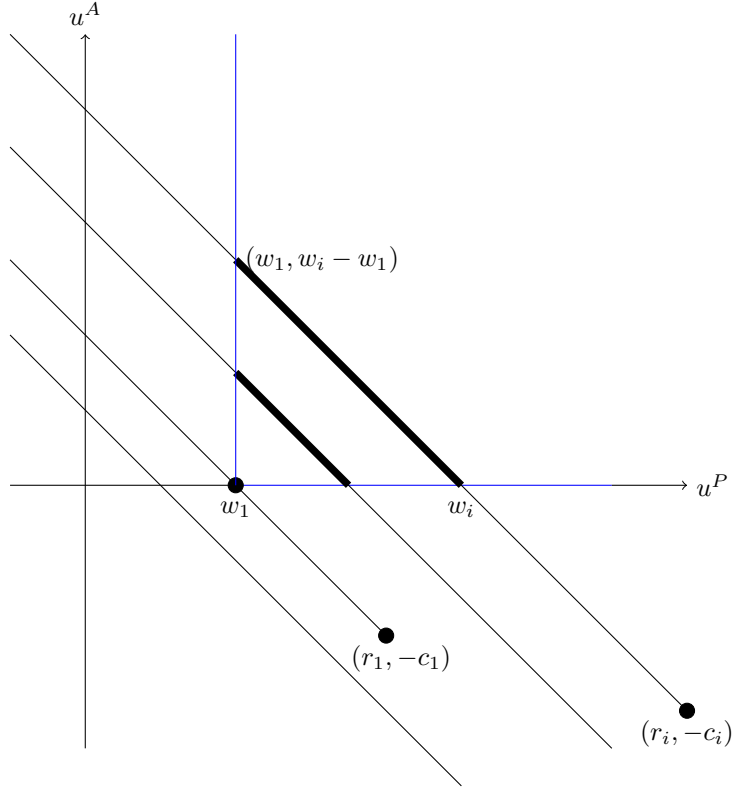

We shall characterize the set of implementable utility profiles $\parentheses*{x,y}$. Immediate bounds on the utilities are $x\geq w_1$, $x\geq w_0$, and $y\geq 0$. Indeed, the principal can get the utility of $w_1=r_1-c_1$ by constantly transferring to the agent $c_1$ (i.e., the contract $\forall j:\;\;t^j= c_1$, which induces action $1$ and a utility of $w_1$ for the principal). The inequalities $x\geq w_0$ and $y\geq 0$ follow from the individual rationality of the principal and the agent, respectively. We shall prove that every utility profile that satisfies these simple constraints is implementable.

The corresponding information structure that implements these utility profiles turns out to have binary signals. Moreover, the implementation is universal in the following sense: A specific information structure parameterized by two values that only depend on the cost vectors and the required expected transfer fits for implementing \emph{all} the implementable action-transfer pairs. This scheme pools together all the actions with cost not smaller than that of the implemented action and a carefully chosen fraction of the probability mass on the less costly actions, summarized together into a \emph{high} signal; and it sends a \emph{low} signal for the remaining probability mass on the less costly actions. The information structure eliminates from consideration all the actions but $0$, $1$, and the action we wish to implement.\footnote{This technique is known as \emph{delegation}; the designer of the information structure reduces the number of inducible actions to just three, and the design of the contract that would make the agent choose a specific action from the three is delegated to the principal. Delegation in contract design appears, e.g., in~\cite{armstrong2010model}.} The optimal contract pays $0$ when the signal is low and an appropriate $s$ when the signal is high. See further intuition about the result in Example \ref{ex:agent-max}.
\begin{theorem}\label{thm:utility_neutral}
A utility profile $\parentheses*{x,y}\in F$ is implementable if and only if $x\geq \max\braces*{w_0
,w_1}$ and $y\geq 0$. Moreover, an action $i\in \brackets*{n}$ is implementable (for some expected transfer) if and only if $w_a\geq \max\braces*{w_0,w_1}$. If action $a\in \brackets*{n}$ is implementable, then the set of implementable action-transfer pairs is $\braces*{\parentheses*{a,s}: s\in \brackets*{c_a, r_a-\max\braces*{w_0,w_1}}}$, and using just two signals suffices to implement any such pair.
\end{theorem}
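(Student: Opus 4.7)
I would first show that the agent's individual rationality (she can always play the default action $0$) gives $y\geq 0$, so $s\geq c_a$. For the principal's lower bound $x\geq \max\{w_1,0\}$, the idea is that \emph{regardless of~$I$} the principal may offer the constant contract $t^j\equiv c_1$. Under it, the agent's expected utility at action $b\geq 1$ is $c_1-c_b\leq 0$, with equality iff $c_b=c_1$; tie-breaking in the principal's favor, together with the WLOG assumption that action $1$ has maximum income among minimum-cost actions, secures the principal a utility of at least $\max\{w_1,0\}$. Summing the resulting bounds on the principal's and agent's utilities yields the action-level necessary condition $w_a=(r_a-s)+(s-c_a)\geq \max\{w_1,0\}$ and the upper bound $s\leq r_a-\max\{w_1,0\}$.

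\textbf{Sufficiency --- construction.} For the converse, given $(a,s)$ meeting the bounds, I would construct an explicit binary-signal information structure $I$. Writing $\pi_b$ for $\Pr[\text{high signal}\mid b]$, I set $\pi_a=1$; $\pi_b=1$ for $c_b>c_a$; $\pi_b=1-\epsilon_b$ for $c_b=c_a$ with $b\neq a$; and $\pi_b=1-(c_a-c_b)/s-\epsilon_b$ for $c_b<c_a$. I would take $\epsilon_1=0$ (when $c_1<c_a$, so that action~$1$ plays the role of ``anchor'' making the agent's incentive constraint bind precisely at $t^H-t^L=s$) and $\epsilon_b>0$ sufficiently small otherwise. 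The target contract is $t^L=0,\,t^H=s$; a direct check shows the agent's utility at this contract is $s-c_a$ for action $a$ and for the anchor, and strictly less for every other action, so tie-breaking selects $a$ since $w_a\geq w_1$, yielding transfer $s$ and principal utility $r_a-s$.

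\textbf{Optimality and the main obstacle.} What remains is to verify that the principal has no profitable deviation from this contract under this $I$. The key claim to prove is that only actions $0$, $1$, and $a$ can be a best response to \emph{any} $(t^L,t^H)\geq 0$; their minimum-transfer implementations are $0,\,c_1,\,s$, giving principal utilities $0,\,w_1,\,r_a-s$, and the last is the maximum by hypothesis. The delicate step, which I expect to be the main obstacle, is the ``squeeze'' argument ruling out every other $b$: when $c_b<c_a$, the agent's preference for $b$ over the anchor forces $t^H-t^L$ above a threshold strictly larger than $s$, while her preference for $b$ over $a$ forces $t^H-t^L$ below a threshold strictly smaller than $s$, with both strictnesses produced by $\epsilon_b>0$; when $c_b=c_a$, the comparison with $a$ forces $t^H\leq t^L$ whereas the comparison with the anchor forces $t^H>t^L$; and when $c_b>c_a$, action $a$ strictly dominates $b$. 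A minor variation handles the edge case $c_1=c_a$, in which $s=c_a$ is forced, the WLOG convention rules out any strictly better tied action, and the IR constraint against action $0$ takes the anchor's role.
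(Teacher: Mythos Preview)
Your approach is essentially the paper's: a binary-signal information structure together with the target contract $(t^L,t^H)=(0,s)$, followed by a verification that the principal cannot profitably deviate. The necessity argument also matches.

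The one substantive difference is that your construction is more elaborate than necessary, and this is exactly where your ``main obstacle'' comes from. The paper sets a \emph{uniform} high-signal probability $p:=1-(c_a-c_1)/s$ for \emph{every} action $b$ with $c_b\le c_a$ and $b\ne a$ (and probability $1$ for $a$ and all costlier actions). With this choice the reduction to $\{0,1,a\}$ is immediate by dominance, not by a squeeze: any $b$ with $c_b>c_a$ has the same signal distribution as $a$ but higher cost; any $b\ne a$ with $c_b\le c_a$ has the same signal distribution as action $1$ but weakly higher cost (and, among ties, $r_1\ge r_b$ by the WLOG convention). No $\epsilon_b$-perturbations and no squeeze inequalities are needed; the remaining check over $\{0,1,a\}$ is then the three-line computation you already describe.

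Your perturbed construction can also be made to work, but the claim ``only actions $0,1,a$ can be a best response to any $(t^L,t^H)\ge 0$'' is not literally true as stated. If some $b\ne 1$ has $c_b=c_1$, your definition gives $\pi_b=\pi_1-\epsilon_b<\pi_1$, so whenever $t^H<t^L$ action $b$ strictly beats action $1$ and can be the unique best response. This does not damage the conclusion---the principal then earns at most $r_b-c_1\le r_1-c_1=w_1\le r_a-s$---but it does mean your squeeze argument needs a small patch for this boundary case (the squeeze itself only applies when $c_1<c_b<c_a$). Adopting the uniform-$p$ construction eliminates both the squeeze and this edge case in one stroke.
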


We shall now discuss two special cases of Theorem~\ref{thm:utility_neutral}.

\begin{example}\label{ex:agent-max}
Consider a social planner with perfectly aligned incentives with the agent; i.e., the social planner aims to maximize the agent's utility. It follows from Theorem~\ref{thm:utility_neutral} that the agent-best action that can be implemented is the welfare-maximizing action $a^*$;\footnote{If there are several welfare-maximizing actions --- implementing them is equally good for the agent.} moreover, the maximal agent's utility is $w_{a^*}-\max\braces*{w_0,w_1}$. From Theorem~\ref{thm:utility_neutral} proof, it follows that the following binary-signal information structure $I$ implements $a^*$: It sends the \emph{high} signal with probability $1$ when the agent takes either the action $a^{*}$ or a more costly action; otherwise, it sends the high signal with probability $1-\parentheses*{c_{a^*}-c_1}/\parentheses*{r_{a^*}-\max\braces*{w_0,w_1}}$ (and with the remaining probability, it uses the \emph{low}~signal).

To gain intuition about the meaning of these expressions, let us analyze the case $w_{a*}>w_1\geq w_0$. Let us consider the two actions $1,a^*$ only and ignore the remaining actions; later, we shall justify this step. Consider the following class of information structures that is parameterized by a single number $\alpha\in \brackets*{0,1}$:

\begin{table}[h]
\begin{center}
\begin{tabular}{ccc}
Signal:& Low                             & High                          \\ \cline{2-3} 
\multicolumn{1}{c|}{1}     & \multicolumn{1}{c|}{$1-\alpha$} & \multicolumn{1}{c|}{$\alpha$} \\ \cline{2-3} 
\multicolumn{1}{c|}{$a^*$} & \multicolumn{1}{c|}{0}          & \multicolumn{1}{c|}{1}        \\ \cline{2-3} 
\end{tabular}
\end{center}
\end{table}

For $\alpha = 0$, the principal can extract the entire welfare by paying $s=c_{a^*}$ at the high signal. The same is true for all $\alpha\leq \frac{c_1}{c_{a^*}}$. However, when $\alpha$ exceeds the threshold $\frac{c_1}{c_{a^*}}$, the principal can no longer extract the entire welfare, as the agent will prefer to take action $1$ under that contract. The optimal contract in such a case would be to increase the transfer $s$ for the high signal to $s=\frac{c_{a^*}-c_1}{1-\alpha}$; the agent would gain a positive utility of $\frac{c_{a^*}-c_1}{1-\alpha}-c_a^*$. When $\alpha$ reaches another threshold of $\alpha^*=1-\frac{c_{a^*}-c_1}{r_{a^*}-w_1}$, and $s$ reaches $r_{a^*}-w_1$, the principal becomes indifferent between using the above contract to incentivize action $a^*$ and just making a constant transfer of $c_1$ to incentivize action $1$. This threshold $\alpha^{*}$ gives the agent the optimal expected utility of $\parentheses*{r_{a^*}-w_1}-c_a^*=w_{a^*}-w_1$; see Figure~\ref{fig:4} for illustration.

\begin{figure}[h]
\caption{The expected principal's (blue) and agent's (red) utilities as a function of $\alpha$.}
    \label{fig:4}
    \centering
    \begin{tikzpicture}[scale=3]
    \draw[->] (-0.05,0) -- (1.05,0);
    \draw[->] (0,-0.05) -- (0,1.05);
    \draw[red] (0,0.01)--(0.3,0.01);
    \draw[blue] (0,0.95)--(0.3,0.95);
    \draw[red] (0.7,0.01)--(1,0.01);
    \draw[blue] (0.7,0.45)--(1,0.45);
    \draw[red] plot [domain=0.3:0.7,samples=100] (\x,{0.2625/(1-(\x))-0.2625/0.7});
    \draw[blue] plot [domain=0.3:0.7,samples=100] (\x,{0.95-(0.2625/(1-(\x))-0.2625/0.7)});
    \draw (1,-0.02)--(1,0.02);
    \node[below] at (0.3,-0.02) {$\frac{c_1}{c_{a^*}}$};
    \node[below] at (0.7,-0.02) {$\alpha^{*}$};
    \node[left] at (-0.02,0.5) {$w_{a^{*}}-w_1$};
    \node[below] at (1,-0.02) {1};
    \node[right] at (1.05,0) {$\alpha$};
    \node[above] at (0,1.05) {$\text{Utility}$};
    \node[below] at (0.02,0) {0};
    \node[left] at (0,1) {$w_{a^*}$};
    \end{tikzpicture}
\end{figure}

To explain why we focus on actions $1,a^*$ only, note that one can set the distribution of signals to be $\parentheses*{0,1}$ for all actions that are more costly than $a^*$. In such a case, the agent will always prefer action $a^*$ over the more costly actions. We can also set the distribution of signals to be $\parentheses*{1-\alpha,\alpha}$ for all actions that are less costly than $a^*$. In such a case, the agent will always prefer action $1$ over actions that are less costly than $a^*$.
\end{example}

\begin{example}
\label{ex:strategic}
Consider a strategic social planner with utility $p_a$ that depends on the agent's action $a$. The planner aims to select an information structure maximizing her utility for the implemented action. It follows from Theorem~\ref{thm:utility_neutral} that the planner should simply induce $a$ that maximizes $p_a$ s.t.~$w_a\geq\max\braces*{w_0,w_1}$.
\end{example}

\begin{proof}[Proof of Theorem~\ref{thm:utility_neutral}]
The necessity of $x\geq \max\braces*{w_0,w_1}$ and $y\geq 0$ was discussed before the theorem. We shall prove the desired necessary and sufficient conditions on implementability of the pairs $\parentheses*{a,s}$. It would immediately imply that $\parentheses*{x,y}\in F$ is implementable for the action $a$ and a transfer $s=r_a-x$ if and only if $ r_a-x\leq r_a-\max\braces*{w_0,w_1}$ and $r_a-\parentheses*{w_a-y}\geq c_a$ --- equivalently, if and only if $x\geq \max\braces*{w_0,w_1}$ and $y\geq 0$, as required.

Given the pair $\parentheses*{a,s}\in \brackets*{n}\times \mathbb{R}_{> 0}$, define $p:=1-\frac{c_a-c_1}{s}$.\footnote{If $s=0$, then trivially $\parentheses*{a,s}$ cannot be implemented --- the agent will better off taking action $0$, as $c_a>0$. We shall assume, therefore, that $s>0$.} Note that $p\in \brackets*{0,1}$, as 
\begin{align*}
  s\geq c_a >c_a-c_1 \Rightarrow 1-\frac{c_a-c_1}{s}>0.  
\end{align*}
Consider the following binary-signal ($k=2$) information structure:  
For the action $i=a$ and the actions $i\in \brackets*{n}$ s.t.~$c_i>c_a$ --- set  $I_i:=\parentheses*{1,0}$; for all the remaining actions $i$ --- take $I_i:=\parentheses*{p,1-p}$. We argue that the information structure $I$ implements the pair $\parentheses*{a,s}$.

Note that for the selected $I$, the actions $i\notin\braces*{0,1,a}$ are suboptimal for the agent under any contract. Indeed, any $i$ that is more costly for the agent than $a$ yields the same transfers as $a$ does --- and hence is inferior to $a$; any other $i$ is more costly than action $1$, but as it yields the same transfers as action $1$ does --- it is inferior. Henceforth, actions $i\notin \braces*{0,1,a}$ can be ignored.

We claim that the contract $t_*=\parentheses*{t^1_*,t^2_*}=\parentheses*{s,0}$ is optimal, and it induces action $a$. First, note that for $t_*$, the agent's utility upon taking actions $a$ or $1$ is $s-c_a\geq 0$, and upon taking action $0$ is $0$. By our tiebreak rule, she chooses $a$; therefore, the principal's utility under $t_*$ is $r_a-s$. It remains to show that $r_a-s$ is the maximal utility that the principal can ensure.

Indeed, any contract yielding action $1$ must have an expected transfer $t_1\geq c_1+w_0$ (otherwise, the agent will prefer action $0$). For such contracts, the principal's utility is $r_1-t_1\leq r_1-c_1$. By assumption, $r_a-s\geq r_1-c_1=w_1$; hence, those contracts are not better than $t_*$. Moreover, any contract yielding the action $0$ results in the principal's utility of $w_0$, which is not more than $r_a-s$. Furthermore, a contract $t=\parentheses*{t^1,t^2}$ yielding the action $a$ must satisfy $t^1 - c_a \geq \min\braces*{t^1, pt^1+\parentheses*{1-p}t^2}-c_1$ (otherwise --- the agent will prefer action $1$). Therefore, 
\begin{align*}
   t^1 - c_a \geq pt^1-c_1 \Rightarrow t^1\geq \frac{c_a-c_1}{1-p}=s. 
\end{align*}
Thus, the principal's utility is at most $r_a-s$, as desired.
\end{proof}
Notably, the information structure that implements the desired utility profile in the above proof satisfies a \emph{monotonicity} property --- i.e.,  more costly actions always induce ``higher'' signals in the first-order stochastic-dominance sense. More accurately, the signaling scheme uses a ``high'' and a ``low'' signal in our construction; a more costly action always induces a higher probability of sending the ``high'' signal. Such monotonicity property is sometimes assumed in the classical principal-agent problem, where the stochastic dominance is over the outcomes (see, e.g.,~\cite{rogerson1985first,sinclair1994first,alvi1997first}). As a corollary of Theorem~\ref{thm:utility_neutral} --- whatever can be implemented by an arbitrary information structure, can also be implemented by a monotonic information structure.

\section{Risk-Averse Agent}\label{sec:risk-averse}
In this section, we consider an agent with a concave von Neumann-Morgenstern utility $v:\mathbb{R}_{\geq 0}\to \mathbb{R}_{\geq 0}$. 
We assume that $v\parentheses*{0}=0$, $v$ is strictly increasing, and $\lim_{z\to \infty}\frac{v\parentheses*{z}}{z}=0$. Recall that the agent's utility at action $i\in \brackets*{n}$ is given by $u^A_i=\mathbb{E}_{j\sim I_i} \brackets*{v\parentheses*{t^j}}-c_i$.

We shall first specify the set of possible utility profiles. If the agent chooses action $i\in \brackets*{n}\setminus\braces*{0}$ and receives a deterministic transfer of $z$, the utility profile is $\parentheses*{r_i-z,v\parentheses*{z}-c_i}$. If the transfer is random with expectation~$z$, the utility profile may change to $\parentheses*{r_i-z,y}$, for some $y\leq v\parentheses*{z}-c_i$.~Denote:
\begin{align*}
    F_i:=\braces*{\parentheses*{x,y}:\; x=r_i-z,\; y\leq v\parentheses*{z}-c_i \emph{ for some } z\in \mathbb{R}_{\geq 0}}=\braces*{\parentheses*{x,y}:\; x\leq r_i,\; y\leq v\parentheses*{r_i-x}-c_i}
\end{align*}
(see Figure~\ref{fig:2} for visualization of $v$ and $F_i$).

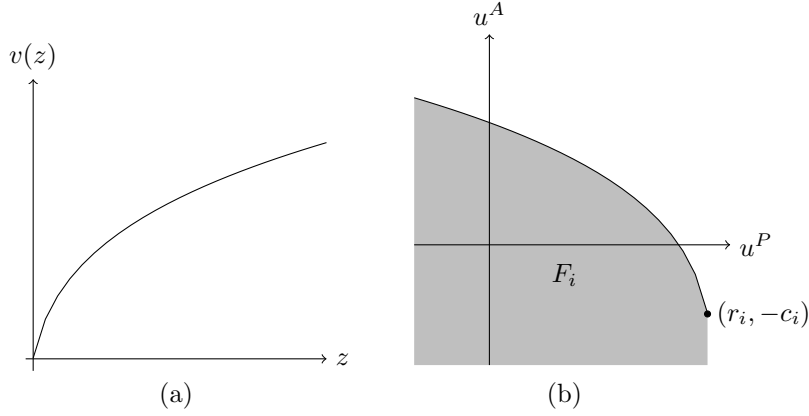
\begin{figure}[h]
    \centering
    \begin{tikzpicture}[xscale=2, yscale=4]
    
    \draw[->] (0,0.47)--(2,0.47);
    \node[right] at (2,0.47) {$z$};
    
    \draw[->] (0.05,0.43)--(0.05,1.4);
    \node[above] at (0.05,1.4) {$v(z)$};
    \draw [domain=0.05:2, variable=\x]
      plot ({\x}, {sqrt{sqrt{\x}}});
    \node at (1,0.35) {(a)};
    \end{tikzpicture}
    \hspace{5mm}
    \begin{tikzpicture}[xscale=2,yscale=4]
        \fill [gray!50!white, domain=0.05:2, variable=\x]
      (-0.05,0.3)--
      plot ({-\x}, {sqrt{sqrt{\x}}})
      -- (-2,0.3) -- cycle;
        
        \draw [domain=0.05:2, variable=\x]
      plot ({-\x}, {sqrt{sqrt{\x}}});
      \draw[->] (-2,0.7)--(0.1,0.7);
        \node[right] at (0.1,0.7) {$u^P$};
        
        \draw[->] (-1.5,0.3)--(-1.5,1.4);
        \node[above] at (-1.5,1.4) {$u^A$};
        
      \filldraw (-0.05,0.47) ellipse (0.02 and 0.01);
      \node[right] at (-0.05,0.47) {$\parentheses*{r_i,-c_i}$};
      \node at (-1,0.6) {$F_i$};
      \node at (-1,0.2) {(b)};
    \end{tikzpicture}
    \caption{The function $v$ appears in (a). The set $F_i$ appears in (b).}
    \label{fig:2}
\end{figure}

Note that $F_i$ is a superset of the possible utility profiles that can be induced from the action $i$. The exact set of possible utilities depends on the information structure $I$. For our purposes, the sets $\braces*{F_i}_{1\leq i\leq n}$ would suffice to characterize the implementable utility profiles. We denote by $F:=\cup_{1\leq i\leq n} F_i \cup \braces*{\parentheses*{r_0,0}}$ the superset of the possible utility profiles; see Figure \ref{fig:3}.

Again, our goal is to characterize the set of the implementable utility profiles $\parentheses*{x,y}\in F$. Trivial bounds on the utilities are $x\geq r_1-v^{-1}\parentheses*{c_1}$, $x\geq w_0$, and $y\geq 0$. Indeed, the principal can achieve utility of $r_1-v^{-1}\parentheses*{c_1}$ by always transferring the agent $v^{-1}\parentheses*{c_1}$ (i.e., the contract $t^j\equiv v^{-1}\parentheses*{c_1}$ for all $1\leq j\leq k$). The inequalities $x\geq w_0$ and $y\geq 0$ follow from individual rationality, similarly to the risk-neutral agent case. We shall show that every utility profile satisfying these inequalities is implementable; see Figure~\ref{fig:3}.

As in the risk-neutral case, there exists a \emph{universal binary-signal information structure} implementing all the implementable utility profiles, parameterized by two values derived from the cost vectors, $v$, and the desired utility profile; this information structure satisfies the same monotonicity property. However, the signals themselves take a more intricate form here. Designing the implementing information structure turns out to be rather subtle. We carefully choose the ``high'' and the ``low'' signals, using the concavity of $v$, to ensure that the principal's optimal contract yields risky monetary transfers for the agent. Specifically, with a positive probability, our risk-averse agent gets zero monetary transfer; note that by the concavity of $v$, the expected agent's income from the transfers is smaller than the expected principal's transfer. We delicately choose the distributions over signals in the information structure to get such a second (nonzero) monetary transfer in the optimal contract that the agent will take the action we wish to implement.

\begin{figure}[h]
    \centering
    \begin{tikzpicture}[xscale=3.5,yscale=7]
        
      \fill [blue, domain=0.95:1.19, variable=\x]
      (-0.5,0.65)--
      plot ({-\x}, {sqrt{sqrt{\x}}-0.1})
      --(-1.18,0.65) -- cycle;
      
      \fill [blue, domain=0.8:1.95, variable=\x]
      (-1.18,0.65)--
      plot ({-\x+1}, {sqrt{sqrt{\x}}-0.3})--cycle;

      \draw [dashed, domain=0.05:0.8, variable=\x]
      plot ({-\x-1}, {sqrt{sqrt{\x}}});
        
      \draw [ domain=0.95:1.8, variable=\x]
      plot ({-\x}, {sqrt{sqrt{\x}}-0.1});
      
      \draw [dashed, domain=0.05:0.95, variable=\x]
      plot ({-\x}, {sqrt{sqrt{\x}}-0.1});

      \draw [domain=0.05:1.95, variable=\x]
      plot ({-\x+1}, {sqrt{sqrt{\x}}-0.3});
      
      \draw [dashed, domain=0.05:2.8, variable=\x]
      plot ({-\x+1}, {sqrt{sqrt{\x}}-0.3});
      
      \draw[->] (-1.8,0.65)--(1,0.65);
      
      \draw[->] (-1.7,0.2)--(-1.7,1.1);
      
      \filldraw (-1.18,0.65) ellipse (0.02 and 0.01);
      
      \filldraw (-1.05,0.47) ellipse (0.02 and 0.01);
      
      \filldraw (-0.05,0.37) ellipse (0.02 and 0.01);
      
      \filldraw (0.95,0.17) ellipse (0.02 and 0.01);

      \node[below] at (-1.05,0.47) {$\parentheses*{r_1,-c_1}$};
      
      \node[below] at (-0.05,0.37) {$\parentheses*{r_i,-c_i}$};
      
      \node[below] at (-1.4,0.65) {$r_1-v^{-1}\parentheses*{c_1}$};
      
      \node[right] at (1,0.65) {$u^P$};
      
      \node[above] at (-1.7,1.1) {$u^A$};
        
    \end{tikzpicture}
    \caption{The area below the solid line is the set $F$. The blue area is the set of implementable utility profiles.}
    \label{fig:3}
\end{figure}
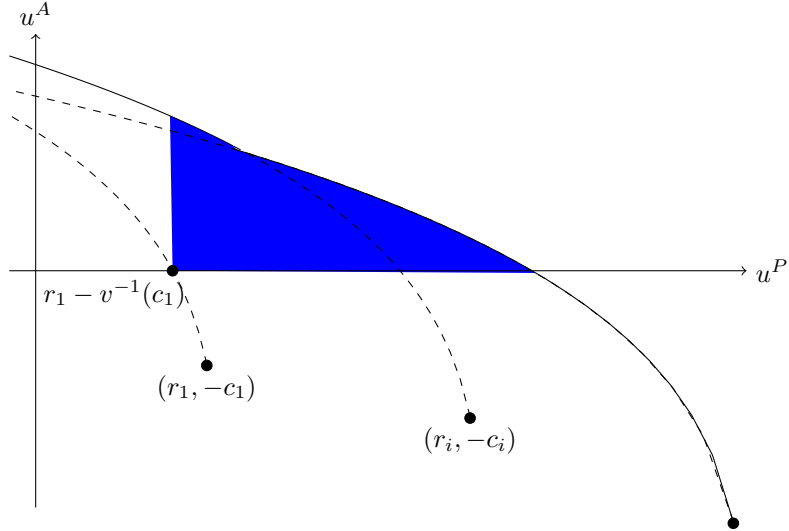

\begin{theorem}
\label{thm:utility_averse}
A utility profile $\parentheses*{x,y}\in F$ is implementable if and only if $x\geq \max\braces*{r_1-v^{-1}\parentheses*{c_1},w_0}$ and $y\geq 0$. Moreover, an action $a\in \brackets*{n}$ is implementable if and only if $r_a-v^{-1}\parentheses*{c_a}\geq \max\braces*{r_1-v^{-1}\parentheses*{c_1},w_0}$. In this case, $\parentheses*{a,s}$ is implementable if and only if $s\in\brackets*{v^{-1}\parentheses*{c_a}, r_a - \max\braces*{r_1-v^{-1}\parentheses*{c_1},w_0}}$, and using just two signals suffices to implement any such pair $\parentheses*{a,s}$.
\end{theorem}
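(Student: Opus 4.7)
My plan is to closely mirror the proof of Theorem~\ref{thm:utility_neutral}, adapting each step to the risk-averse setting: prove necessity from the principal's and the agent's ``safe'' options, exhibit an implementing binary-signal information structure for every admissible action-transfer pair, and finally extend to arbitrary utility profiles in $F$ by refining the information structure with a small amount of randomness in the ``high'' signal. Necessity is routine: $y \geq 0$ by the agent's individual rationality (default to action $0$); $x \geq 0$ because the principal can always post the trivial contract $t^j \equiv 0$; and $x \geq r_1 - v^{-1}\parentheses*{c_1}$ because she can always guarantee this utility by the constant contract $t^j \equiv v^{-1}\parentheses*{c_1}$, which is individually rational at action $1$ and preferred by the agent over every costlier action. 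The lower bound $s \geq v^{-1}\parentheses*{c_a}$ will follow from the agent's IR at $a$ combined with Jensen's inequality applied to the concave $v$.

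For sufficiency of an action-transfer pair $\parentheses*{a, s}$, I plan to use exactly the same shape of information structure as in the risk-neutral proof, but with $p := 1 - \parentheses*{c_a - c_1}/v\parentheses*{s}$ in place of $1 - \parentheses*{c_a - c_1}/s$: set $I_i := \parentheses*{1, 0}$ for $i = a$ and for all $i$ with $c_i > c_a$, and $I_i := \parentheses*{p, 1 - p}$ for $i$ with $c_i < c_a$. The claim will be that the contract $t_* = \parentheses*{s, 0}$ is optimal and implements $\parentheses*{a, s}$. Elimination of actions $i \notin \braces*{0, 1, a}$ from the agent's best responses carries over verbatim from the risk-neutral proof, and the choice of $p$ makes the agent indifferent between $a$ and $1$ at $t_*$, both giving her utility $v\parentheses*{s} - c_a \geq 0$. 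Two optimality checks remain: (i) no other contract lets the principal beat $r_a - s$, which I will verify by using Jensen's inequality for the convex $v^{-1}$ to show that any contract implementing action $1$ has expected transfer at least $v^{-1}\parentheses*{c_1}$ (hence principal utility at most $r_1 - v^{-1}\parentheses*{c_1} \leq r_a - s$), and by using the IC of $a$ over $1$ together with limited liability to force $v\parentheses*{t^1} \geq \parentheses*{c_a - c_1}/\parentheses*{1 - p} = v\parentheses*{s}$ and hence $t^1 \geq s$ for any contract implementing $a$; and (ii) at $t_*$ itself, tie-breaking actually assigns action $a$.

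The main obstacle will be (ii), i.e.\ showing $r_a - s \geq r_1 - ps$, equivalently $r_a - r_1 \geq s\parentheses*{c_a - c_1}/v\parentheses*{s}$. I plan to use the implementability condition $r_a - s \geq \max\braces*{r_1 - v^{-1}\parentheses*{c_1}, 0}$ to reduce this to the single inequality $s\parentheses*{v\parentheses*{s} - c_a + c_1}/v\parentheses*{s} \geq v^{-1}\parentheses*{c_1}$, and then substitute $u := v\parentheses*{s}$ and $g := v^{-1}$ to obtain $f\parentheses*{u} := g\parentheses*{u}\parentheses*{1 - \parentheses*{c_a - c_1}/u} \geq g\parentheses*{c_1}$ for all $u \geq c_a$. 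I will verify this by checking the boundary case $f\parentheses*{c_a} = g\parentheses*{c_a} c_1 / c_a \geq g\parentheses*{c_1}$ (equivalent to $g\parentheses*{x}/x$ being non-decreasing in $x$, a standard consequence of $g$ being convex with $g\parentheses*{0} = 0$) and then $f' \geq 0$ on $\brackhalf*{c_a, \infty}$. Thus the tie-breaking inequality ultimately rests on the agent's risk aversion, encoded as the convexity of $v^{-1}$.

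To upgrade from action-transfer pairs to a general utility profile $\parentheses*{x, y} \in F$, I will pick an $i$ with $\parentheses*{x, y} \in F_i$ and set $a := i$, $s := r_a - x$; the construction above already implements the profile $\parentheses*{x, v\parentheses*{s} - c_a}$, so the only remaining case is $y < v\parentheses*{s} - c_a$. Here I will refine the information structure by splitting the ``high'' signal into two sub-signals of probabilities $q$ and $1 - q$ under action $a$ (and $pq, p\parentheses*{1-q}$ under cheaper actions), paying $t_H$ and $0$ respectively, and choose $q$ and $t_H$ so that $q t_H = s$ and $q v\parentheses*{t_H} = y + c_a$. The assumption $v\parentheses*{z}/z \to 0$ guarantees by continuity that such $q$ and $t_H$ exist for every $y + c_a \in \parhalf*{0, v\parentheses*{s}}$. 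The corresponding tie-breaking inequality becomes $s\parentheses*{W - c_a + c_1}/W \geq v^{-1}\parentheses*{c_1}$ with $W := y + c_a \in \brackets*{c_a, v\parentheses*{s}}$, and follows by the same principle as in the $\parentheses*{a,s}$ case: its left-hand side is non-decreasing in $W$ and already at $W = c_a$ equals $s c_1 / c_a \geq v^{-1}\parentheses*{c_1}$ by $s \geq v^{-1}\parentheses*{c_a}$ and the convexity of $v^{-1}$.
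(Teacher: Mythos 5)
Your necessity argument and your construction for action--transfer pairs $\parentheses*{a,s}$ are sound: the degenerate high signal $I_a=\parentheses*{1,0}$ with $p=1-\parentheses*{c_a-c_1}/v\parentheses*{s}$ is a legitimate variant of the paper's construction (the paper instead makes $I_a=\parentheses*{q,1-q}$ stochastic from the outset), and your explicit check of the tie-break $r_a-s\geq r_1-ps$ via convexity of $v^{-1}$ is correct. The genuine gap is in the final step, which is the actual crux of the risk-averse theorem: implementing profiles $\parentheses*{x,y}$ with $y<v\parentheses*{r_a-x}-c_a$, strictly below the deterministic-transfer frontier. First, a bookkeeping inconsistency: if you keep $p=1-\parentheses*{c_a-c_1}/v\parentheses*{s}$ from the pair construction, the agent's utility from action $1$ under the refined scheme is $pq\,v\parentheses*{t_H}-c_1=p\parentheses*{y+c_a}-c_1>y$ whenever $c_a>c_1$ and $y+c_a<v\parentheses*{s}$, so she strictly prefers action $1$ and the scheme does not induce $a$ at all; the tie-break inequality $s\parentheses*{W-c_a+c_1}/W\geq v^{-1}\parentheses*{c_1}$ you derive silently presumes the corrected value $p=\parentheses*{y+c_1}/\parentheses*{y+c_a}$.

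Second, and fatally even after that correction: your two high sub-signals $H_1,H_2$ have identical likelihood ratios across actions ($q$ vs.\ $pq$ and $1-q$ vs.\ $p\parentheses*{1-q}$, both equal to $1/p$), so they carry no differential information about the agent's action, and the principal will not honor the contract $\parentheses*{t_H,0,0}$. Concretely, after setting $t^L=0$, the IC constraint of $a$ over $1$ reduces to $q\,v\parentheses*{t^{H_1}}+\parentheses*{1-q}v\parentheses*{t^{H_2}}\geq \parentheses*{c_a-c_1}/\parentheses*{1-p}=y+c_a$, and by Jensen's inequality (convexity of $v^{-1}$) the cheapest way to satisfy it is the \emph{deterministic} payment $t^{H_1}=t^{H_2}=v^{-1}\parentheses*{y+c_a}$, which costs the principal $v^{-1}\parentheses*{y+c_a}<qt_H=s$ whenever $y+c_a<v\parentheses*{s}$. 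Hence your information structure implements $\parentheses*{r_a-v^{-1}\parentheses*{y+c_a},\,y}$, not $\parentheses*{x,y}$. The paper's construction avoids exactly this: it keeps a genuinely binary signal space with $I_a=\parentheses*{q,1-q}$ and $I_1=\parentheses*{p,1-p}$, $p<q<1$, so the low signal is informative about shirking and cannot be split off from the ``cheap'' part of the high signal; then any contract implementing $a$ must satisfy $v\parentheses*{t^1}-v\parentheses*{t^2}\geq\parentheses*{c_a-c_1}/\parentheses*{q-p}$, which pins the optimal contract down to $\parentheses*{z_*,0}$ and forces the principal to bear the full cost $qz_*=r_a-x$ of the risk she imposes on the agent. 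You would need to replace your ``split the high signal'' refinement with this (or an equivalent) genuinely binary randomization under action $a$ itself.
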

\begin{proof}
The necessity of $x\geq \max\braces*{r_1-v^{-1}\parentheses*{c_1},w_0}$ and $y\geq 0$ has been discussed before the theorem. Now we shall prove the sufficiency. We shall only consider the case $w_1\geq w_0$. When $w_1< w_0$, the same arguments hold, with action $0$ playing the role of action $1$ throughout the entire~analysis.

Let $\parentheses*{x,y}\in F_a$ be a utility profile that satisfies the inequalities. Namely,  $\max\{ r_1-v^{-1}\parentheses*{c_1},w_0\} \leq x \leq r_a$ and $0\leq y \leq v\parentheses*{r_a-x}-c_a$. We shall show that $\parentheses*{x,y}$ can be implemented by a binary-signal information structure $I$ that is given by $I_i=\parentheses*{p,1-p}$ for $i\neq a$ with $c_i\leq c_a$ (the \emph{low signal}), and $I_i=\parentheses*{q,1-q}$ for $i=a$ and any action $i\in \brackets*{n}$ s.t.~$c_i>c_a$ (the \emph{high signal}), where $p$ and $q$ will be defined soon in Equation~\eqref{eq:pq}.

Denote by $d_0:=v'_+\parentheses*{0}$ the right-hand derivative of $v$ at $0$. Consider the function $\frac{v\parentheses*{z}}{z}$. The concavity of $v$ implies that $\frac{v\parentheses*{z}}{z}$ is monotonically decreasing. Note that $\lim_{z\to 0+} \frac{v\parentheses*{z}}{z}=d_0$ by L'Hopital's rule and $\lim_{z\to \infty} \frac{v\parentheses*{z}}{z}=0$ by assumption. By concavity of $v$: $d_0 \cdot \parentheses*{r_a-x}\geq v\parentheses*{r_a-x}$. Moreover, by the assumption on $x,y$: $y \leq v\parentheses*{r_a-x}-c_a$. Combining the last two conditions yields: $0\leq \frac{y+c_a}{r_a-x}\leq d_0$. By the intermediate value theorem and the monotonicity of $\frac{v\parentheses*{z}}{z}$, there exists a unique solution $z=z_{*}\in \mathbb{R}_{\geq 0}$ for $\frac{v\parentheses*{z}}{z}=\frac{y+c_a}{r_a-x}$.\footnote{Here we define $\frac{v\parentheses*{0}}{0}:=d_0$.}

We set:
\begin{align}\label{eq:pq}
    p:=\frac{r_a-x}{z_{*}}\frac{y+c_1}{y+c_a} \text{ and } q:=\frac{r_a-x}{z_{*}}.
\end{align}
The above values were carefully chosen to ensure that under a suitable binary-signal information structure inducing distributions $\parentheses*{p,1-p}$ and $\parentheses*{q,1-q}$ over the signals, the optimal contract --- that will be defined using the intermediate value theorem on $\frac{v\parentheses*{z}}{z}$ --- would yield exactly the utility profile $\parentheses*{x,y}$.

We first verify that $p,q\in \brackets*{0,1}$. The inequalities $p,q\geq 0$ and $p\leq q$ are trivial. It remains to show that $q\leq 1$. Assume by way of contradiction that $r_a-x>z_{*}$. 
We get that:
\begin{align*}
 \frac{v\parentheses*{z_{*}}}{z_{*}}>\frac{v\parentheses*{r_a-x}}{r_a-x}\geq \frac{y+c_a}{r_a-x},
\end{align*}
where the first inequality follows from the monotonicity of $\frac{v\parentheses*{z}}{z}$, and the second inequality follows from $y \leq v\parentheses*{r_a-x}-c_a$. This contradicts the fact that $\frac{v\parentheses*{z_{*}}}{z_{*}}= \frac{y+c_a}{r_a-x}$.

Similarly to Theorem~\ref{thm:utility_neutral} --- for all contracts, the agent prefers action $1$ over action $i\notin \braces*{0,a}$, since such actions provide her same distribution over transfers as action 1 does, but action 1 is less costly. Therefore, henceforth we can ignore all actions except $0,1$, and $a$. 

For the chosen information structure $I$, consider the contract $t_*=\parentheses*{t^1_*,t^2_*}=\parentheses*{z_{*},0}$. Under this contract, the agent's utilities at actions $1$ and $a$ are, respectively,
\begin{align*}
    &u^A_1\parentheses*{t_*}=pv\parentheses*{z_{*}}-c_1=\parentheses*{r_a-x}\frac{y+c_1}{y+c_a}\frac{y+c_a}{r_a-x}-c_1=y,\\
    &u^A_a\parentheses*{t_*}=qv\parentheses*{z_{*}}-c_a=\parentheses*{r_a-x}\frac{y+c_a}{r_a-x}-c_a=y.
\end{align*}
The agent brakes ties in the principal's favour; hence, he chooses action $a$. This action yields a utility of $u^P_a=r_a-qz_{*}=x$ to the principal. It remains to show that $t_*$ is an optimal contract for the~principal.

Note that among the contracts implementing action $1$, the optimal one for the principal is $t=\parentheses*{v^{-1}\parentheses*{c_1},v^{-1}\parentheses*{c_1}}$ (i.e., a deterministic transfer of $v^{-1}\parentheses*{c_1}$). The principal's utility in such a contract is $r_1-v^{-1}\parentheses*{c_1}\leq x$; hence, the principal would prefer the contract $t_*$. Since $x\geq w_0$, we also know that the principal will prefer the contract $t^*$ over those inducing action $0$. It remains to show that among the contracts that implement the action $a$, the contract $t_*$ is the optimal one. 

Indeed, every contract $t=\parentheses*{t^1,t^2}$ that implements the action $a$ must satisfy $u^A_a\parentheses*{t}\geq u^A_1\parentheses*{t}$ and $u^A_a\parentheses*{t}\geq 0$. We first show that $u^A_a\parentheses*{t}\geq u^A_1\parentheses*{t}$ implies $u^A_a\parentheses*{t}\geq 0$. Indeed:\footnote{We assume $q>p$, as for $q=p \implies c_a=c_1$ the claim is immediate.}
\begin{align}\label{eq:a1}
\begin{aligned}
 &u^A_a\parentheses*{t}\geq u^A_1\parentheses*{t} \Leftrightarrow \\
 &qv\parentheses*{t^1}+\parentheses*{1-q}v\parentheses*{t^2}-c_a\geq pv\parentheses*{t^1}+\parentheses*{1-p}v\parentheses*{t^2}-c_1 \Leftrightarrow \\
 &v\parentheses*{t^1}-v\parentheses*{t^2}\geq \frac{c_a-c_1}{q-p}.
 \end{aligned}
\end{align}
The inequality $u^A_a\parentheses*{t}\geq 0$ now follows from:
\begin{align}\label{eq:a0}
\begin{aligned}
    u^A_a\parentheses*{t} & =qv\parentheses*{t^1}+\parentheses*{1-q}v\parentheses*{t^2}-c_a\geq  \\
    & q\parentheses*{ v\parentheses*{t^2}+\frac{c_a-c_1}{q-p}}+
    \parentheses*{1-q}v\parentheses*{t^2}-c_a=   \\
    & v\parentheses*{t^2}+\frac{p}{q-p}c_a-\frac{q}{p-q}c_1=  \\
    & v\parentheses*{t^2}+\frac{y+c_1}{c_a-c_1}c_a-\frac{y+c_a}{c_a-c_1}c_1= \\
    & v\parentheses*{t^2}+y\geq v\parentheses*{0}+y\geq 0,
\end{aligned}
\end{align}
where the inequality between the first and the second lines follows from~\eqref{eq:a1}, and the equality of the third and the fourth lines are obtained by substitution of the values of $p$ and $q$ from~\eqref{eq:pq}.

If $t^2>0$, then slightly reducing $t^2$ preserves the inequality in both~\eqref{eq:a1} and~\eqref{eq:a0}. This operation results in a contract inducing the same action with a lower expected monetary transfer. Thus, the original contract is suboptimal. We are left now only with contracts of the form $t=\parentheses*{t^1,0}$.

If $u^A_a\parentheses*{t}>u^A_1\parentheses*{t}$, then the inequality in~\eqref{eq:a0} is strict and $u^A_a>0$. Such a contract is suboptimal, as the principal can decrease $t^1$ while still inducing the same action. Therefore, $u^A_a\parentheses*{t}=u^A_1\parentheses*{t}$ and the equation in~\eqref{eq:a1} holds with equality. Namely,
\begin{align}\label{eq:f}
    v\parentheses*{t^1}=\frac{c_a-c_1}{q-p}=\parentheses*{c_a-c_1}\frac{z_{*}\parentheses*{y+c_a}}{\parentheses*{r_a-x}\parentheses*{c_a-c_1}} \Leftrightarrow \frac{v\parentheses*{t^1}}{z_{*}}=\frac{y+c_a}{r_a-x},
\end{align}
where the second equality follows from substituting the values of $p$ and $q$ from~\eqref{eq:pq}.
By the strict monotonicity of $v$ and~$\frac{v\parentheses*{z_{*}}}{z_{*}}=\frac{y+c_a}{r_a-x}$, we get that $t^1=z_{*}$ is the unique solution of~\eqref{eq:f}, as desired.

To see the ``moreover'' part of the theorem, we observe that the condition $r_a-v^{-1}\parentheses*{c_a}\geq \max\braces*{r_1-v^{-1}\parentheses*{c_1},w_0}$ is equivalent to the set of implementable utility profiles intersecting $F_a$. Once the intersection is nonempty --- our arguments above prove that the action $a$ is implementable. The discussion before the definition of $F_i$ implies the necessity of the condition on $s$; the sufficiency follows from the expected transfer under $t_{*}$ being $s=qz_{*}=r_a-x$, which can take every value in $r_a-\brackets*{\max\braces*{r_1-v^{-1}\parentheses*{c_1},w_0}, r_a-v^{-1}\parentheses*{c_a}}=\brackets*{v^{-1}\parentheses*{c_a}, r_a - \max\braces*{r_1-v^{-1}\parentheses*{c_1},w_0}}$.
\end{proof}

\section{Limiting the Misinformation of the Principal}
\label{sec:noise}
In some scenarios (see Subsection~\ref{sub:apply}), it is natural to assume that the principal necessarily has some relatively accurate information about the agent's action. We capture this intuition by the following stylized model. We set $k=n$ --- i.e., the number of signals equals the number of actions. We focus on a risk-neutral agent. It will be convenient for us in this section to impose the following genericity assumption: $w_i\neq w_j$ for $i\neq j$. Namely, all actions have different social welfare.

Fix some $0\leq d \leq n-1$ called the \emph{noise level}. We restrict the set of admissible information structures to be $\mathcal{I}\parentheses*{d}\subset \mathbb{R}^{n\times n}$, where 
\begin{align}\label{eq:id}
\begin{aligned}
    \mathcal{I}\parentheses*{d}:=\bigg\{ I\in \mathbb{R}^{n\times n}: \quad  &\forall a\in \brackets*{n}, \, \,  \sum_{j=1}^{n} I_{a,j}=1 , \\
    &\forall j\in \brackets*{n} \text{ s.t. }\exists a\in \brackets*{n}\text{ with }\ I_{a,j}>0 \quad \text{ (i.e., no dummy signals)} \\
    & \text{and } \, \, 
    I_{a,j}>0 \Rightarrow \absolute*{a-j}\leq d   \, \,  \qquad \qquad \text{ (i.e., bounded noise)} \bigg\}
\end{aligned}
\end{align}
Namely, for action $a\in \brackets*{n}$, the signal will be $j\in \brackets*{a-d,a+d}$ with probability 1. Moreover, for technical reasons, it is convenient for us to assume the \emph{no-dummy-signal} property as in the second line of Equation \eqref{eq:id}: Every signal is obtained with positive probability for some action. Similarly to the notion of implementability, we say that an action $a\in \brackets*{n}$ is \emph{$d$-implementable} if there exists $I\in \mathcal{I}\parentheses*{d}$ that implements the action~$a$.

In this section, we impose the limited liability assumption as in Section~\ref{sec:risk-neutral}-\ref{sec:risk-averse}. However, it turns out that the proofs of our results in this section (apart from Proposition~\ref{pro:hard} in Subsection~\ref{sub:negative}) do not rely on this assumption. Therefore, our results hold regardless of whether the limited liability is imposed or not.

The following theorem characterizes the set of $d$-implementable actions.

\begin{theorem}\label{thm:noise}
An action $a\in \brackets*{n}$ is $d$-implementable if and only if $w_a\geq \max\braces*{w_0, w_1}$ and:
\begin{align*}
   w_a \geq \max_{b\in \brackets*{n}} \min_{i\in \brackets*{n} \cap \brackets*{b-d,b+d}} w_i.
\end{align*}
\end{theorem}
Namely, an action is $d$-implementable if and only if its welfare (weakly) exceeds the minimal welfare among every $2d+1$ consequent actions, together with the least costly action and the default~action.
\begin{proof}
Assume first that $w_a\geq \max\braces*{w_0, w_1}$ and $w_a \geq \max_{b\in \brackets*{n}} \min_{i\in \brackets*{n} \cap \brackets*{b-d,b+d}} w_i$. We shall construct an information structure $I\in \mathcal{I}\parentheses*{d}$ that implements $a$. Let us first construct an information structure $I'$ that has the following three properties:
\begin{enumerate}
    \item For every action $i$, $I'$ sends a deterministic signal  (i.e., with probability $1$) $j\in \brackets*{i-d,i+d}$.
    \item $I'$ implements the action $a$.
    \item $I'$ yields principal's expected utility of $w_a$.
\end{enumerate}
However, the constructed $I'\notin \mathcal{I}\parentheses*{d}$ would violate the no-dummy-signal property. This issue can be overcome by adding a small $\epsilon>0$ probability to some carefully chosen actions.

The construction of $I'$ is as follows. Let $B=\braces*{b_1,...,b_m}\subset \brackets*{n}\setminus \braces*{a}$ with $b_1\leq b_2 \leq ... \leq b_m$ be the set of actions whose welfare is weakly below $w_a$, besides the action $a$ itself. Note that $b_1=1$ (as $w_a\geq w_1$) and $b_{i+1}-b_i\leq 2d+1$ (since otherwise $w_a>\min_{i'\in \brackets*{\parentheses*{b_i+d+1}-d,\parentheses*{b_i+d+1}+d}}w_{i'}$, which contradicts our condition for $b=b_i+d+1$). Moreover, $b_m\geq n-d$ --- indeed, otherwise $w_a>\min_{i'\in \brackets*{n-d,n}}w_{i'}$, which contradicts our condition for $b=n$.

The information structure $I'$ sends the same deterministic signal for all actions $b\in \brackets*{b_i,b_{i+1}-1}$. We set this signal to be $\lfloor \frac{b_i+b_{i+1}-1}{2} \rfloor$. Note that the condition $b_{i+1}-b_i\leq 2d+1$ ensures that, indeed, in each action $b$, the sent signal is in the interval $\brackets*{b-d,b+d}$. Similarly, for all actions $b\in \brackets*{b_m,n}$, we send the deterministic signal $n$.

For the information structure $I'$, the agent will never choose an action $b\notin B$, regardless of the contract. This is because action $i\in B$ maximizing $b_i$ subject to $b_i\leq b$ induces the same signal and is less costly. Therefore, actions $b\notin B$ can be ignored. We remain in a situation in which the actions in $B$ are fully revealed. Moreover, the action $a\in B$ has the maximal welfare. Therefore, the optimal contract will induce the action $a$ with a utility of $w_a$ for the principal.\footnote{Note that since $w_a\geq w_0$, a contract inducing action $0$ is not better for the principal than a contract extracting the full surplus from the action $a$.}

Now we carefully modify $I'$ to define $I\in \mathcal{I}\parentheses*{d}$ as follows. 
For every signal $j\in \brackets*{n}$ that is sent with probability $0$ for all actions, we replace the probability to $I_{b,j}=\epsilon>0$ for a single action $b \in \arg \min_{i\in \brackets*{b-d,b+d}\cap \brackets*{n}}$, which makes the signal $j$ being a non-dummy signal. Thereafter we adjust the $1$ weights of $I'$ to of $x\in \brackets*{1-2d\epsilon, 1}$ to make the matrix row-stochastic.

The key property of this additive $\epsilon$-adjustment is that the probability weights of $\epsilon$ are never added to actions $b\notin B$. Therefore, those actions can be ignored also in the adjusted information structure $I$.\footnote{This key property is crucial. E.g., had we non-carefully added the weight of $\epsilon$ to some action whose welfare is $w_b>w_a$, even though the probability $\epsilon$ is tiny, this would have caused a \emph{drastic} change in the optimal contract, which now might have implemented the action $b$; see Proposition~\ref{pro:prin-ut} below. In particular, it demonstrates that the principal's utility is non-continuous in the information structure and, therefore, standard continuity arguments with sufficiently small perturbations are insufficient for our proof.} In the remaining part of the principal-agent interaction, we are left with $\absolute*{B}$ actions, all of which are revealed with probability $1-O\parentheses*{\epsilon}$. Now, it is easy to verify that for a sufficiently small $\epsilon$, the induced action is the action $a$, whose welfare is strictly larger than the welfare of the other actions.

The proof of the converse part of the Theorem relies on the following general observation that might be of independent interest.

\begin{proposition}\label{pro:prin-ut}
Let $I\in \mathbb{R}^{n\times k}$ be an information structure (not necessarily in the restricted setting of the current section). For every signal $j\in \brackets*{k}$, let $\supp\parentheses*{j}=\braces*{a\in \brackets*{n}:I_{a,j}>0}\subseteq \brackets*{n}$ be the set of actions that induce the signal $j$ with a strictly positive probability. Then under an optimal contract, the principal's utility is at least $\min_{a\in \supp\parentheses*{j}} w_a$.
\end{proposition}

\begin{proof}
Let $t^*:= \min_{i\in \supp\parentheses*{j}} \frac{c_i}{I_{i,j}} $, and consider the contract that has a positive transfer of $t^*$ only in signal $j$ and $0$ transfer in all other signals.

By taking an action $a\notin \supp\parentheses*{j}$ the agent gets $0$ transfer -- and thus a strictly negative utility. By taking an action $a\in \supp\parentheses*{j}$ but $a \notin \arg  \min_{i\in \supp\parentheses*{j}} \frac{c_i}{I_{i,j}}$, the agent's expected utility under an optimal contract $t^*$ is $u^A_a=I_{a,j}t^*-c_a < I_{a,j} \frac{c_a}{I_{a,j}}-c_a=0$. By taking an action $a\in \arg  \min_{i\in \supp\parentheses*{j}} \frac{c_i}{I_{i,j}}$, the agent's expected utility is $u^A_a=I_{a,j}t^*-c_a=0$. 

Since the agent breaks ties in favor of the principal, she picks one of the actions $a \in  \arg  \min_{i\in \supp\parentheses*{j}} \frac{c_i}{I_{i,j}}$. The principal's utility for such a contract is $u^P_a=w_a-u^A_a=w_a\geq \min_{i\in \supp\parentheses*{j}} w_i$. In an optimal contract, the principal's utility might only increase.
\end{proof}

We turn to the proof of the converse part of the theorem. The necessity of the condition $w_a\geq \max\braces*{w_0,w_1}$ for the implementability of $a$ trivially follows from Theorem~\ref{thm:utility_neutral}. Indeed, if this condition is violated, then $a$ is not implementable. In particular, it is not $d$-implementable.

Let $a\in \brackets*{n}$ be an action with $w_a<\min_{i\in \brackets*{n} \cap \brackets*{b-d,b+d}} w_i$ for some $b\in \brackets*{n}$. We shall show that $a$ is not implementable. Indeed, by Proposition~\ref{pro:prin-ut} for the signal $j=b$ and by the no-dummy-signal assumption, we deduce that the agent's utility under an optimal contract must be at least $\min_{i\in \supp\parentheses*{b}}\geq \braces*{w_i}\min_{i\in \brackets*{b-d,b+d}} \braces*{w_i}$ (note that $\supp{\parentheses*{b}}\supseteq\brackets*{b-d,b+d}$). The maximal utility that the principal may get if the played action is $a$ is at most $u^P_a=w_a-u^A_a \leq w_a$. Therefore, $a$ is not implementable, as desired.
\end{proof}

\subsection{A Computational Hardness Result}
\label{sub:negative}
Inspired by the positive result in the previous subsection, we turn to studying a more general setting, still with a risk-neutral agent. Specifically, proceeding with the direct manager-employee example from Subsection~\ref{sub:apply}, we generalize the constrained design problem above by allowing an action to be mapped to a distant outcome, but with a bounded probability. This results in a strictly more general class of constraints in which the regulator is only allowed to use information structures with bounded probabilities of mapping actions to signals.\footnote{To be precise, here we do not impose the genericity assumption on the welfare for simplicity of presentation (we do not impose it also in our basic non-constrained model), but this issue can be easily resolved.} For this model, we show the following hardness result.

\begin{proposition}
\label{pro:hard}
Consider the information design problem with constraints specified by a matrix $C=\parentheses*{C_{i,j}}\in\brackets*{0,1}^{n\times k}$ such that only information structures that satisfy the condition $I_{i,j}\le C_{i,j}$ for each action $1\le i\le n$ and signal $1\le j\le k$ are allowed. Then deciding whether any given action $a$ is implementable or not is NP-complete.
\end{proposition}

Note that by setting trivial bounds from $\braces*{0,1}$ on the probabilities, one can get both the unconstrained information design principal-agent problem, and the constrained version introduced at the beginning of this section. 

\begin{proof}[Proof of Proposition~\ref{pro:hard}]
    It is not hard to see that the problem belongs to NP, as the implementing information structure has a polynomial-sized representation. To prove NP-hardness, we reduce from the set cover problem. Consider a set cover instance with a universe $U=\braces*{0,\ldots,s-1}$ and given sets $S_1,\ldots,S_t$, and assume that we wish to decide whether it is possible to cover the universe using all but $l$ sets. Let us introduce the following constrained instance of the information design principal-agent problem. For each set $S_j$, define an action $a_j$ with cost $1$ and reward $1$ and a signal $\sigma_j$. For every element $i\in U$, introduce an action $b_i$ with cost $0$ and reward $1$. Add a further action $a_0$ with cost $\frac{l}{t}$ and reward $1+\frac{l}{t}$, and take $a=a_0$. Set $C_{a_j,\sigma_j}=1$ for $j=1,\ldots,t$, $C_{b_i,\sigma_j}=1$ for $i\in S_j$, $C_{a_0,\sigma_j}=\frac{1}{t}$ for $j=1,\ldots,t$, and let all the other entries in $C$ be $0$. Namely, an action $b_i$ representing some element $i\in U$ can only be mapped to signals corresponding to sets containing $i$ (with any probability), an action $a_j$ with $j>0$ must be deterministically mapped to the signal $\sigma_j$, and $a_0$ can be mapped to any signal with probability at most $1/t$.
 
    If there exist $t-l$ sets out of $S_1,\ldots,S_t$, with their set of indices being some $J$, covering the entire universe $U$, let us consider the information structure distributing the action $a_0$ uniformly among the $t$ signals, mapping each action $b_i$ deterministically to some $\sigma_j$ with $j\in J$ s.t.~$i\in S_j$, and mapping each $a_j$ with $j>0$ deterministically to $\sigma_j$. Then the contract paying $1$ for each signal with an index outside $J$ gives the agent expected utility of $l\cdot \frac{1}{t}-\frac{l}{t}=0$ for taking action $a_0$, and no action gives the agent a strictly positive expected utility. When the agent takes the action $a_0$, the principal gets the expected utility of $\parentheses*{1+\frac{l}{t}}-l\cdot \frac{1}{t}=1$. As no action has social welfare larger than~$1$, this situation is optimal for the principal.
 
    Assume now that under some information structure $I$, there exists an optimal contract inducing the action $a_0$. Note that under the contract with the constant monetary transfer of $0$, the action $b_0$ is (weakly) optimal for the agent, yielding the principal expected utility of $1$. Thus, if $a_0$ is implementable, then in the implementing optimal contract the principal must extract the full social welfare. In particular, under the implementing information structure, in any optimal contract such that the agent takes action $a_0$:~(i)~no positive monetary transfer can be made for any signal that occurs with a positive probability for some action $b_i$ (as such an action costs $0$ to the agent);~(ii)~no monetary transfers larger than $1$ can occur (as $a_j$ with $j>0$ costs $1$ to the agent, and it is deterministically mapped to $\sigma_j$).

    Let $J$ be the set of the indices of all signals to which at least one action $b_i$ is mapped with a positive probability. By definition of the matrix $C$, $\braces*{S_j}_{j\in J}$ is a cover of $U$. If $\absolute*{J}>t-l$, then (i) implies that $a_0$ is mapped with a positive probability under $I$ to strictly less than $l$ signals with a positive monetary transfer. Therefore, by (ii) and the definition of $C$, we get that the agent's expected utility upon taking $a_0$ is strictly smaller than $l\cdot \frac{1}{t}-\frac{l}{t}=0$, meaning that the contract would not induce the action~$a_0$.
\end{proof}

\section{Additional Related Literature}
\label{sec:related}
\textbf{Closely related works. } Among the papers related to ours, we should mention~\citet{garrett2023optimal}, who analyze the implementability of principal-agent utility profiles, but in a setting different from ours. Specifically, they assume that the social planner can pick any bounded mapping from a continuum-sized set of outcomes to the agent's costs. Their characterization turns out to differ from our results; in particular, their set of inducible utility profiles is convex, which is not true in the framework we study. \citet{mahzoon2023indicator}~consider a framework similar to ours, with the social planner's incentives being perfectly aligned with those of the agent under the limited liability assumption (as we do in Example~\ref{ex:agent-max} for a risk-neutral agent). They apply a geometric approach to analyze the effect of properties of the information structure on implementable utility profiles. A follow-up work to our paper~\cite{lin2023principal} gives explicit expressions for the information structure maximizing Utilitarian social welfare and Nash social welfare~\cite{kaneko1979nash}, as well as for the corresponding utility profile and the action taken by the agent. Similarly to us, they study both a risk-neutral and a risk-averse agent case. These results can be viewed as an extension of our Example~\ref{ex:strategic}. Compared to these papers, we study the implementability of actions and utility profiles in a general setting. While our focus is not on optimization issues, the follow-up work shows the relevance of our results to optimization.

\textbf{The principal-agent problem. } Starting with the seminal papers of~\citet{ross1973economic,mirrlees1976optimal,holmstrom1979moral}, there has been a vast literature on the principal-agent problem. In particular, the dependence of the principal's utility on the information structure has been discussed --- starting with~\citet{gjesdal1982information}, who studies a variation of the principal-agent problem he refers to as the \emph{agency information problem}. The model in that paper is more general than ours --- it assumes that not only the outcome is not observed by the principal, but also the utilities of both the principal and the agent depend on an unknown state. The paper ranks the information structures according to the principal's preferences, providing a characterization for the ranking that generalizes the Blackwell ordering~\cite{blackwell1950comparison, Blackwell53}.~\citet{kim1995efficiency}~shows such a characterization based on second-order stochastic dominance, and~\citet{demougin2001ranking} introduce an equivalent criterion.

\citet{radner1984nonconcavity} and~\citet{singh1985monitoring}~suggest mild assumptions on the setting under which the gross marginal value of the information at the point of no-revelation is, respectively, non-positive and zero;~\citet{demougin2001monitoring} compare different monitoring policies when both the principal and the agent are risk-neutral under the limited liability constraint;~and~\citet{namazi2013role} compares information structures under different kinds of uncertainty about the outcome. However, our focus is different from these papers --- we are interested in the interplay of the principal's \emph{and agent's}~utilities.

\citet{milgrom1981good}~introduces the first-order stochastic-dominance-based notion of ``favorableness'' of the information revealed to the principal and studies its effect on the expected transfer --- focusing not only on the principal's but also on the agent's point of view.~\citet{jacobides2001information,silvers2012value,chaigneau2018does} further study the effect on both for the principal's and the agent's utilities by improving the informativeness of the information structure.~\citet{sobel1993information} explores the effect of the agent's exogenous information on the principal's expected utility. Our results for bounded noise on the signals also address the effect of informativeness. While these works focus on fixed information structures, we study the case in which the information structure is not specified by the model; rather, it can be adjusted by a regulator to serve its purpose.

There are further variations of the principal-agent problem with partially-observed outcomes. An important example is studying the effect of the principal's supervision policy on the agent with some information being non-contractible. The effect of changing the performance measures of a contract on the equilibrium payoffs has already been discussed in~\cite{holmstrom1979moral}. This question has been further studied for linear contracts with non-contractible principal's payoff~\cite{baker1992incentive}, as well as for contracts in a multi-task model encompassing the tradeoff between risk and distortion~\cite{baker2002distortion}.~\citet{zhu2018better}~shows that improving the informativeness of a supervision policy in an organization might decrease the organization's payoff. A work by~\citet{abreu1990toward} studies the effect of the informativeness of the monitoring policy on the equilibrium payoffs in a different framework of repeated games. Unlike these papers, we mostly consider all possible implementable utility profiles under \emph{some} supervision policy, rather than studying the effect of the revealed information on the payoffs. An exception is Section~\ref{sec:noise}, which considers the set of implementable actions as a function of the noise level of the signal. In our setting, adding more noise to the signals might only \emph{decrease} the worst-case principal's utility over the set of all implementable~actions.

As part of our Theorem~\ref{thm:no_ll}, we show that without the limited liability assumption, the agent's expected utility under an optimal contract must be zero, whether the agent is risk-neutral or risk-averse. It can also be immediately deduced from the results of~\citet{grossman1992analysis} that apply to additively separable agent's utility. Among further related models we should mention~\cite{hoffmann2021only} --- assuming that both observing the outcomes and making the transfer to the agent can be delayed, the paper focuses on the optimal timing of the payment.

\textbf{Bayesian persuasion. } Our setting is conceptually related to the extensive recent literature on information design in Sender-Receiver models, e.g., the seminal work of~\citet{kamenica2011bayesian} on Bayesian persuasion (for some recent surveys, see~\cite{dughmi2017algorithmic,kamenica2019bayesian, bergemann2019information}). A particularly relevant branch of the Bayesian persuasion literature focuses on moral hazard problems.~\citet{boleslavsky2018bayesian}~introduce a three-party interaction between a sender a receiver and an agent.~\citet{boleslavsky2018bayesian}~assume that the prior distribution is chosen endogenously. Specifically, the prior is specified by the agent's level of effort incentivized by the sender's information structure.~\citet{zapechelnyuk2020optimal}~study another type of moral hazard in Bayesian persuasion in which the state of nature, representing a product quality level, is chosen by a strategic agent (producer). The producer further chooses a publicly-known price for the product. The sender (regulator) chooses an information structure specifying the information revealed to the receiver (buyer) about the product quality. He shows that choosing the information structure can be considered as a \emph{delegation problem}~\cite{holmstrom1980theory} (for application to Bayesian persuasion, see, e.g.,~\cite{kolotilin2019persuasion}). That is, the regulator can w.l.o.g.~just restrict the set of quality levels that the producer may use. We use a similar technique in Theorem~\ref{thm:utility_neutral},~\ref{thm:utility_averse} and~\ref{thm:no_ll} proofs --- specifically, we provide an information structure that reduces the set of relevant actions. Further works combining Bayesian persuasion with moral hazard include~\citet{rodina2017information} (Bayesian persuasion affected by the agent's career concerns~\cite{holmstrom1999managerial}) and~\citet{kwak2022optimal} (a binary-action binary-outcome principal-agent setting in which the outcome depends on a state unknown to the~agent).

Despite the conceptual similarity, there are several fundamental differences in designing information about a state of nature in the Bayesian persuasion setting and about a strategic agent's action in the principal-agent setting. Information design cannot affect a nature's choice of a state, but it does affect the agent's chosen action. In particular, Bayesian persuasion models typically assume a common prior distribution over the states of nature, while there is no prior on the agent's action in the principal-agent setting. This crucial difference leads to an interesting information design problem, in which the designed information structure affects the receiver's action (i.e., the principal's contract) via solving an induced optimal contract problem, as opposed to picking the action with the best expected posterior utility as in classic information design. This angle may spur more research on non-Bayesian-persuasion-style information design.

\section{Conclusions and Future Work}
\label{sec:con}
This paper analyzes the implementability of actions and utility profiles in the principal-agent model under certain information structures. We provide a clean necessity and sufficiency characterization for implementability both for a risk-neutral and a risk-averse agent, with and without the limited liability assumption. It turns out that a combination of conditions that are naturally necessary for implementability is also sufficient. We further extend this study of implementability to situations with constrained information structures. Similarly, we provide a tight characterization of implementable actions when the noise added to the agent's action is bounded by some distance $d$. As a corollary of our result,  the total welfare degrades gracefully as the noise level $d$~increases. In contrast, we prove that deciding whether an action is implementable is NP-complete for a more general class of constrained information structures.

Our model and results give rise to many interesting open research questions. First, one can study other notions of ``bounded noise''. For example, instead of bounded distance, another natural notion is \emph{expected} noise bounded by some $d$. Secondly, our bounded noise model can also be viewed as a \emph{bipartite constraint graph} in which actions are on one side, whereas signals are on the other side. Bounded noise corresponds to a specific bipartite constraint graph. Thus, an interesting research direction is to consider similar questions, as well as computational questions, for a general bipartite constraint graph on the information structure. Another intriguing bounded setting requires the admissible information structures to be Blackwell-dominated by a given exogenous information structure; this restriction naturally represents a principal observing a noisy signal about the outcome specified by the exogenous information structure.

Thirdly, and further zooming out, one interesting future research direction is to extend our model to principal-agent problems with a much richer structure than the basic one considered in this paper. For example, one can ask similar questions for combinatorial contracts~\cite{dutting2022combinatorial} in which an agent can take multiple actions, and typed contract design in which the agent has private information as well, modeled as a private agent's type~\cite{darrough1986moral,bajari2014moral}. The latter question is particularly intriguing since it leads to an interplay of information design, adverse selection due to the private agent's type, and moral hazard due to unobservable agent's~actions.

Finally, there are various further algorithmic questions not studied yet. For instance, in many real-life applications, the regulator may be constrained by guaranteeing a certain minimal welfare share for both the principal and agent (e.g., for fairness reasons). One algorithmic question is how to compute an information structure such that the equilibrium of the resultant principal-agent problem maximizes welfare subject to the welfare share constraints. Furthermore, in our characterization of constrained information structures, while we can construct information structures that achieve any implementable action, these information structures unfairly favor the principal and lead to a $0$ surplus for the agent. While this serves our purpose of characterizing the boundary of implementable actions, the construction may not be practically useful. This leads to an open question of computing an information structure satisfying some practical constraints that implements a given action.

\paragraph{\bf Acknowledgements.} Yakov Babichenko is supported by the Binational Science Foundation BSF grant No.~2018397. This work is funded by the European Union (ERC, ALGOCONTRACT, 101077862, PI: Inbal Talgam-Cohen). Inbal Talgam-Cohen is supported by a Google Research Scholar award and by the Israel Science Foundation grant No.~336/18. Yakov and Inbal are supported by the Binational Science Foundation BSF grant No.~2021680. Haifeng Xu is supported by NSF award No.~CCF-2303372, Army Research Office Award No.~W911NF-23-1-0030 and Office of Naval Research Award No.~N00014-23-1-2802. Konstantin Zabarnyi is supported by a PBC scholarship for Ph.D. students in data science. The authors are grateful to anonymous reviewers for their helpful remarks.

\bibliographystyle{plainnat}
\bibliography{main}

\appendix
\section{Dropping the Limited Liability Assumption}
\label{ap:further}
In this appendix, we discuss the implementable utility profiles and action-transfer pairs after dropping the limited liability assumption --- that is, \emph{the transfers might be negative}. Formally, the principal may select any vector $\parentheses*{t_1,...,t_n}\in\mathbb{R}^n$ of transfers, rather than only $\parentheses*{t_1,...,t_n}\in\mathbb{R}_{\geq 0}^n$. When the agent is risk-averse, we extend the definition of the von Neumann-Morgenstern utility function to the whole real line. That is, we take a concave and strictly increasing $v:\mathbb{R}\to \mathbb{R}$ with $v\parentheses*{0}=0$ and $\lim_{z\to \infty}\frac{v\parentheses*{z}}{z}=0$. The agent's utility at action $i\in \brackets*{n}$ is still $u^A_i=\mathbb{E}_{j\sim I_i} \brackets*{v\parentheses*{t^j}}-c_i$.

We shall prove that while dropping the limited liability assumption does not affect the set of implementable actions, the only implementable agent's utility is $0$. Concretely, the set of no-limited-liability-implementable utility profiles is the intersection of the set of limit-liability-implementable utility profiles (which have been characterized in Theorems~\ref{thm:utility_neutral} and~\ref{thm:utility_averse}) with the line $u^A=0$; see Figures~\ref{fig:1} and~\ref{fig:3}.
\begin{theorem}\label{thm:no_ll}
Both for the risk-neutral agent case and the risk-averse agent case, the same actions are implementable with and without the limited liability assumption (as specified by Theorems~\ref{thm:utility_neutral} and~\ref{thm:utility_averse}, respectively). For an implementable action $a$, $\parentheses*{a,s}$ is implementable if and only if :
\begin{itemize}
    \item $s=c_a$ for a risk-neutral agent,
    \item $s\in\brackets*{v^{-1}\parentheses*{c_a}, r_a - \max\braces*{r_1-v^{-1}\parentheses*{c_1},w_0}}$ for a risk-averse agent.
\end{itemize}
Furthermore, the set of implementable utility profiles is:
\begin{itemize}
    \item $\braces*{\parentheses*{w_a,0}: a \text{ is implementable}}$ for a risk-neutral agent, and 
    \item $\braces*{\parentheses*{x,0}: a \text{ is implementable and } x\in\brackets*{\max\braces*{r_1-v^{-1}\parentheses*{c_1},w_0},r_a-v^{-1}\parentheses*{c_a}}}$ for a risk-averse~agent.
\end{itemize}
Moreover, all implementable actions and utility profiles are implementable using just two signals.
\end{theorem}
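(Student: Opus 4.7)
The plan is to reduce the no-limited-liability case to Theorems~\ref{thm:utility_neutral} and~\ref{thm:utility_averse}. The central insight is that without limited liability the principal can always uniformly lower transfers to extract any positive agent surplus, so every implementable utility profile must have $y = 0$. Once that is established, the sufficient direction of those theorems, specialized to $y = 0$, supplies the information structures and contracts we need, and the necessary range of $s$ drops out of IR plus the principal's outside option. In particular, action $a$ will be implementable without limited liability iff its $s$-range is non-empty --- iff $a$ is implementable in the limited-liability setting --- giving the ``same actions are implementable'' claim.

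First I would prove that $y = 0$ at any equilibrium. Suppose some information structure $I$ has an optimal contract $t$ implementing $(a,s)$ at agent utility $y > 0$. Define a modified contract $t'$ by $t'^j := t^j - y$ in the risk-neutral case, and by $t'^j := v^{-1}(v(t^j) - y)$ in the risk-averse case; the latter is well-defined because the extended $v\colon \mathbb{R} \to \mathbb{R}$ is strictly increasing and concave with $v' > 0$ everywhere and $v(z) \to -\infty$ as $z \to -\infty$, hence surjective onto its range. A direct calculation yields $u^A_i(t') = u^A_i(t) - y$ for every $i \geq 1$, while $u^A_0(t') = 0$; thus action $a$ (now with utility $0$) ties with action $0$ and weakly beats every other $i$, so tie-breaking still picks $a$ provided the principal strictly prefers it --- which she does, because strict monotonicity of $v$ gives $t'^j < t^j$ whenever $y > 0$, so her utility under $t'$ strictly exceeds that under $t$. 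This contradicts the optimality of $t$.

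From $u^A_a = 0$, IR gives $s = c_a$ in the risk-neutral case and, via Jensen's inequality, $s \geq v^{-1}(c_a)$ in the risk-averse case. The upper bound on $s$ reflects the principal's outside option --- she can always match action $1$ at expected transfer $c_1$ (resp.~$v^{-1}(c_1)$) or action $0$ at transfer $0$, exactly as in Theorems~\ref{thm:utility_neutral} and~\ref{thm:utility_averse} --- giving $r_a - s \geq \max\{w_1,0\}$ or $\max\{r_1 - v^{-1}(c_1),0\}$. For sufficiency, I would invoke those theorems with target $(x,y) = (r_a - s, 0)$; the resulting contracts $(c_a, 0)$ and $(z_*, 0)$ are nonnegative, hence feasible in the no-LL setting too, and they already implement $(a, s)$ among LL contracts. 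What remains is to verify that no negative-transfer contract beats them on the same information structure.

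The main obstacle is the risk-averse part of this last verification: showing that allowing $t^2 < 0$ does not lower the principal's cost $q t^1 + (1 - q) t^2$ below $s = q z_*$. My plan is a one-variable minimization: parametrize feasible contracts by $c := v(t^2)$ subject to the binding IC $v(t^1) = c + \Delta$ (with $\Delta = (c_a - c_1)/(q - p)$) and IR $c + q \Delta \geq c_a$; the objective $q v^{-1}(c + \Delta) + (1 - q) v^{-1}(c)$ is strictly increasing in $c$, so the minimum lies at the IR boundary $c = c_a - q \Delta$. Substituting the $p, q$ of Theorem~\ref{thm:utility_averse} at $y = 0$ gives $v(t^2) = 0$ and $v(t^1) = v(z_*)$, recovering $t_*$. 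The risk-neutral analog is trivial: any contract inducing $a$ still has $\mathbb{E}_{I_a}[t^j] \geq c_a$, capping the principal's utility at $w_a$. Combining this with the $y = 0$ reduction yields the claimed characterizations of implementable actions, transfers, and utility profiles.
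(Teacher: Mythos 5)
Your overall route coincides with the paper's: necessity of $y=0$ by uniformly lowering the agent's payoff (the paper subtracts a tiny amount from every transfer; your exact shift $t'^j=v^{-1}\parentheses*{v\parentheses*{t^j}-y}$ plus the tie-breaking rule works just as well, and your well-definedness remark is fine since a concave strictly increasing $v$ on $\mathbb{R}$ is unbounded below), necessity of the $s$-range via IR/Jensen and the principal's outside options, and sufficiency via the same binary information structures and the contracts $\parentheses*{c_a,0}$ and $\parentheses*{z_*,0}$ of Theorems~\ref{thm:utility_neutral} and~\ref{thm:utility_averse} specialized to $y=0$; the risk-neutral verification is indeed immediate.

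The gap sits exactly at the point you yourself flag as the main obstacle. You minimize the principal's expected payment only over contracts for which IC binds, $v\parentheses*{t^1}=v\parentheses*{t^2}+\Delta$, and treat this restriction as without loss. Under limited liability it is (the paper first pushes $t^2$ to $0$, then lowers $t^1$, using that \eqref{eq:a1} forces $u^A_a\geq v\parentheses*{t^2}\geq 0$), but without limited liability it is not automatic: when $t^2<0$, the chain \eqref{eq:a0} no longer yields $u^A_a\geq 0$, so IR can be the binding constraint while IC is slack, and then lowering $t^1$ to tighten IC is infeasible. The concern is not cosmetic: the IR-only minimizer is the constant contract $v^{-1}\parentheses*{c_a}$, whose expected payment is at most $qz_*$ by Jensen (typically strictly less), and it is excluded only because it violates IC. So you must prove that every contract satisfying IC and IR admits an IC-binding contract with no larger expected payment --- and this is precisely the paper's missing-from-your-plan step: replace $\parentheses*{t^1,t^2}$ by $t_\epsilon=\parentheses*{t^1-\epsilon,\,t^2+\frac{\epsilon q}{1-q}}$, which keeps the expected payment at action $a$ fixed and, by concavity, weakly raises $u^A_a$; by the intermediate value theorem some $\epsilon$ makes \eqref{eq:a1} an equality, and tightness of \eqref{eq:a0} then gives $v\parentheses*{t^2_\epsilon}\geq 0$, i.e., $t^2_\epsilon\geq 0$, reducing everything to the limited-liability analysis. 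Your one-variable minimization on the IC-binding slice is correct and recovers $t_*=\parentheses*{z_*,0}$, so adding this equalization/IVT argument closes the proof; remaining loose ends are only the edge cases $q=p$ (i.e., $c_a=c_1$) and $q=1$, which the paper handles in footnotes, and the one-line Jensen check that negative-transfer contracts inducing action $1$ still pay at least $v^{-1}\parentheses*{c_1}$ in expectation.
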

\begin{proof}
The necessity of the conditions on implementable actions follows exactly as in Theorems~\ref{thm:utility_neutral} and~\ref{thm:utility_averse}. The necessity of $y=0$ in any implementable utility profile $\parentheses*{x,y} \in F$ follows (in both settings) from the observation that if a contract $t$ induces strictly positive expected agent's utility --- subtracting a tiny number from each of its coordinates does not change the induced action and decreases the expected transfer; therefore, $t$ cannot be optimal. The necessity of the condition on $x$ for an implementable $\parentheses*{x,0}\in F$ follows immediately from the definition of $F$ (and the principal's alternative of inducing action $1$ in the risk-averse~case).

It remains to prove the sufficiency of the presented conditions for implementability of actions and utility profiles (the conditions on $s$ follow as in Theorem~\ref{thm:utility_neutral} and~\ref{thm:utility_averse} proofs). We shall only do this for a risk-averse agent; the risk-neutral case is similar and even simpler to prove.

Indeed, suppose that an action $a$ satisfies $r_a-v^{-1}\parentheses*{c_a}\geq \max\braces*{r_1-v^{-1}\parentheses*{c_1},w_0}$, and $x$ satisfies $x\in\brackets*{\max\braces*{r_1-v^{-1}\parentheses*{c_1},w_0},r_a-v^{-1}\parentheses*{c_a}}$. Define $p,q,d_0,z_{*}$ as in Theorem's~\ref{thm:utility_averse} proof, taking $y=0$. Similarly to that proof, we shall show that the action $a$ and the utility profile $\parentheses*{x,0}$ can be implemented by a binary-signal information structure $I$ with $I_i=\parentheses*{q,1-q}$ for $i=a$ and any $i\in \brackets*{n}$ s.t.~$c_i>c_a$, and $I_i=\parentheses*{p,1-p}$ for the remaining $i\in \brackets*{n}$.

As in Theorem~\ref{thm:utility_averse} proof --- it is enough to consider the case $w_1\geq w_0$; moreover, for all contracts, the agent prefers action $1$ over the actions $i\notin \braces*{0,a}$. Moreover, the agent's utilities at actions $0$, $1$, and $a$ for the contract $t_*=\parentheses*{z_{*},0}$ are $0$; thus, by the tiebreak rule, she takes action $a$ under $t_*$, yielding utility of $x$ to the principal. It remains to show that $t_*$ is the optimal contract for the principal. To this end --- it is enough to prove that there exists an optimal contract $t=\parentheses*{t^1,t^2}$ with $t^2\geq 0$.\footnote{From this point, we can finish as in Theorem~\ref{thm:utility_averse} proof.}

Indeed, consider an optimal contract $t=\parentheses*{t^1,t^2}$ with $t^2<0$. As $t$ is optimal, we have $u^A_a\parentheses*{t}\geq u^A_1\parentheses*{t}$ and $u^A_a\parentheses*{t}\geq 0$. Define a contract
\begin{align*}
    t_{\epsilon}=\parentheses*{t_{\epsilon}^1,t_{\epsilon}^2}:=\parentheses*{t^1-\epsilon,t^2+\frac{\epsilon q}{1-q}}
\end{align*}
for some $\epsilon\geq 0$ that satisfies\footnote{Here we assume $q<1$; for $q=1$ the proof is straightforward.}
\begin{align*}
 t^1-\epsilon \geq t^2+\frac{\epsilon q}{1-q}.  
\end{align*}
 As $t^1 > t^2$ and $q t^1 +\parentheses*{1-q} t^2 = q t_{\epsilon}^1 +\parentheses*{1-q} t_{\epsilon}^2$, we get by the concavity of $v$ that:
$$0\leq u^A_a\parentheses*{t} =qv\parentheses*{t^1}+\parentheses*{1-q}v\parentheses*{t^2}-c_a\leq qv\parentheses*{t_{\epsilon}^1}+\parentheses*{1-q}v\parentheses*{t_{\epsilon}^2}-c_a=u^A_a\parentheses*{t_{\epsilon}}.$$
Moreover, by definition of $t_{\epsilon}$, we have $u^P_a\parentheses*{t_{\epsilon}}=u^P_a\parentheses*{t}$.

As taking $\epsilon\geq 0$ s.t.~$t^1-\epsilon=t^2+\frac{\epsilon q}{1-q}$ reverses the sign of the inequality in~\eqref{eq:a1} (for $t_{\epsilon}$ serving as $t$) --- the continuity of $v$ implies that there exists $\epsilon\geq 0$ s.t.~the inequality in~\eqref{eq:a1} becomes equality for $t_{\epsilon}$. This $t_{\epsilon}$ leads to action $a$, as $u^A_a\parentheses*{t_{\epsilon}}\geq u^A_1\parentheses*{t_{\epsilon}}$ and $u^A_a\parentheses*{t_{\epsilon}}\geq 0$. Moreover, $t_{\epsilon}$ is optimal, since $t$ is optimal, both contracts induce action $a$, and $u^P_a\parentheses*{t_{\epsilon}}=u^P_a\parentheses*{t}$. However, as we saw in Theorem~\ref{thm:utility_averse} proof, the tightness of the inequality in~\eqref{eq:a1} implies the tightness of the first inequality in~\eqref{eq:a0}. Therefore, $0\leq u^A_a\parentheses*{t_{\epsilon}}=v\parentheses*{t_{\epsilon}^2}$, which implies $t_{\epsilon}^2\geq 0$. Thus, $t_{\epsilon}$ is the desired optimal~contract.
\end{proof}

\end{document}